\newcounter{tightenum}
\newenvironment{tightitemize}%
{\begin{list}{$\bullet$}{\setlength{\itemsep}{0pt}\setlength{\parsep}{0pt}\setlength{\topsep}{0pt}}}%
{\end{list}}
{\begin{list}{(\roman{tightenum})}{\usecounter{tightenum} \setlength{\itemsep}{0pt}\setlength{\parsep}{0pt}\setlength{\topsep}{0pt}}}%
{\end{list}}
\DeclareMathAlphabet{\mathcal}{OMS}{cmsy}{m}{n}
\g@addto@macro\bfseries{\boldmath}
\DeclareMathAlphabet\BEuScript{U}{eus}{b}{n}
\newcommand{\CC}{{\mathbb C}}
\newcommand{\NN}{{\mathbb N}}
\newcommand{\RR}{{\mathbb R}}
\DeclareMathAlphabet{\mathbfsf}{\encodingdefault}{\sfdefault}{bx}{n}
\DeclareBoldMathCommand\Db{D}
\DeclareBoldMathCommand\Fb{F}
\DeclareBoldMathCommand\Ib{I}
\DeclareBoldMathCommand\Lb{L}
\DeclareBoldMathCommand\Mb{M}
\DeclareBoldMathCommand\Nb{N}
\DeclareBoldMathCommand\Pb{P}
\DeclareBoldMathCommand\Ob{0}
\DeclareBoldMathCommand\rb{R}
\DeclareBoldMathCommand\ab{a}
\DeclareBoldMathCommand\bb{b}
\DeclareBoldMathCommand\cb{c}
\DeclareBoldMathCommand\eb{e}
\DeclareBoldMathCommand\ib{i}
\DeclareBoldMathCommand\jb{j}
\DeclareBoldMathCommand\kb{k}
\DeclareBoldMathCommand\pb{p}
\DeclareBoldMathCommand\rb{r}
\DeclareBoldMathCommand\ub{u}
\DeclareBoldMathCommand\vb{v}
\DeclareBoldMathCommand\xb{x}
\newcommand{\sa}{{\sf a}}
\newcommand{\St}{{\mathfrak{S}}}
\newcommand{\id}{\text{id}}
\newcommand{\BB}{\mathscr{B}}
\newcommand{\FF}{\mathscr{F}}
\newcommand{\HH}{\mathscr{H}}
\newcommand{\KK}{\mathscr{K}}
\newcommand{\sN}{\mathscr{N}}
\newcommand{\OO}{{\mathcal{O}}}
\newcommand{\Xc}{{\mathcal{X}}}
\newcommand{\Yc}{{\mathcal{Y}}}
\newcommand{\CoinX}[1]{C_0^\infty({#1})}
\newcommand{\II}{\leavevmode\hbox{\rm{\small1\kern-3.8pt\normalsize1}}}
\newcommand{\ip}[2]{{\langle #1\mid #2\rangle}}
\newcommand{\ket}[1]{{\vert #1\rangle}}
\newcommand{\bra}[1]{{\langle #1 \mid}}
\newcommand{\Tr}{\textrm{Tr}\,}
\newcommand{\supp}{\textrm{supp}\,}
\renewcommand{\Re}{\textrm{Re}\,}
\DeclareMathOperator{\diam}{diam}
\DeclareMathOperator{\Ad}{Ad}
\DeclareMathOperator{\SO}{SO}
\DeclareMathOperator{\pr}{pr}
\newcommand{\Loc}{\mathbfsf{ Loc}}
\newcommand{\CAlg}{\mathbfsf{ C^*\hbox{-}Alg}}
\newcommand{\Af}{{\mathscr A}}
\newcommand{\Sf}{{\mathscr S}}
\newcommand{\ogth}{{\mathfrak o}}
\newcommand{\tgth}{{\mathfrak t}}
\newcommand{\Ngth}{{\mathfrak N}}
\newcommand{\Fgth}{{\mathfrak F}}
\newcommand{\Rgth}{{\mathfrak R}}
\newcommand{\kin}{{\text{kin}}}
\newtheorem{theorem}{Theorem}[section]
\newtheorem{proposition}[theorem]{Proposition}
\newtheorem{definition}[theorem]{Definition}
\newtheorem{lemma}[theorem]{Lemma}
\definecolor{PaleGreen}{rgb}{0.6,1.0.6}
\definecolor{bg}{rgb}{.22,.19,.54}
\definecolor{Navy}{rgb}{.137,.137,.556}
\definecolor{Gold}{rgb}{.93,.82,.24}
\definecolor{vlg}{rgb}{.95,.95,.95}
\definecolor{Green}{rgb}{0,1,0}
\definecolor{LightBlue}{rgb}{0.75,0.75,1}
\begin{document}
\title{The split property for quantum field theories \\ in flat and curved spacetimes}

\author[1]{Christopher J. Fewster\thanks{\tt chris.fewster@york.ac.uk}}
\affil{Department of Mathematics,
University of York, Heslington, York YO10 5DD, United Kingdom.}

\date{\today}
\maketitle

\begin{abstract} 
The split property expresses a strong form of independence of spacelike separated regions in
algebraic quantum field theory. In Minkowski spacetime, it can be proved under
hypotheses of nuclearity. An expository account is given of nuclearity and the split property, 
and connections are drawn to the theory of quantum energy inequalities. In addition, a recent proof of the split property for quantum field theory in curved spacetimes is outlined, emphasising the essential ideas.
\end{abstract}

{\noindent\em Dedicated to the memory of Rudolf Haag}

\section{Introduction}

Special relativity entails that information and influences propagate at speeds no greater than that of light, in order that causes precede effects according to all inertial clocks. Laboratories in spacelike separated spacetime regions should therefore function independently. The \emph{split property} is an expression of
this independence in the algebraic formulation of quantum field theory~\cite{Haag} that is considerably deeper than the assumption of Einstein causality (commutation of spacelike separated observables). 
One aim of this paper is to present a short account of a recent extension of the split property
to locally covariant quantum field theories in curved spacetimes~\cite{Few_split:2015}. 
However, it is also intended to give an expository account of the split property and its consequences,
and also of the hypotheses of \emph{nuclearity} under which the split property was first proved
in a general setting~\cite{BucDAnFre:1987}. This serves both to make apparent the significance of the
split property, and also the physical circumstances in which it holds. Although much of the material in Sections~\ref{sec:split} and~\ref{sec:nuc} reviews other works, the discussion of links between nuclearity and Quantum Energy Inequalities is based on results proved here for the first time. The proof of our main results
in curved spacetime (Section~\ref{sec:cst}) is a streamlined and extended version of arguments going back to Verch~\cite{Verch:1993,Verch_nucspldua:1993}. In fact, it provides a proof strategy for a number
of other results, including the Reeh--Schlieder property~\cite{Sanders_ReehSchlieder} and modular nuclearity~\cite{LecSan:2015}, and is therefore of independent interest.

\section{The split property in Minkowski space}\label{sec:split}
 
To begin, let us consider a configuration of three spacetime regions in Minkowski
spacetime, displayed in Fig.~\ref{fig:regions}. Region $O_1$ is contained within region 
$O_2$, which is \emph{spacelike separated} from region $O_3$. All three regions
are supposed to be open and relatively compact. Spacelike separation means that there
are no causal curves with one endpoint in $O_2$ and the other in $O_3$; in other words, 
there is no possibility of communication between them, if we assume the basic precepts
of special relativity, and it should be possible for experiments in $O_3$ to take place independently of those in region $O_2$. 

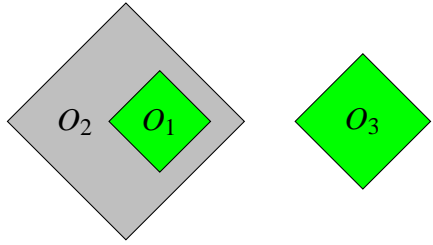
\begin{figure} 
\begin{center}
\begin{tikzpicture}[scale=0.45]
\draw[fill=Green] (7,2) +(-2,0) -- +(0,-2) -- +(2,0) -- +(0,2) -- cycle;
\node at (7,2) {$O_3$};
\draw[fill=lightgray] (0,2) +(-3.5,0) -- +(0,-3.5) -- +(3.5,0) -- +(0,3.5) -- cycle;
\draw[fill=Green] (1,2) +(-1.5,0) -- +(0,-1.5) -- +(1.5,0) -- +(0,1.5) -- cycle;
\node at (-1.5,2) {$O_2$};
\node at (1,2) {$O_1$};
\end{tikzpicture}
\end{center} \caption{Three spacetime regions in Minkowski space. Time runs up the page,
and rays travelling at the speed of light have unit gradient. Thus there is no possibility of communication between regions $O_2$ and $O_3$.}
\label{fig:regions}
\end{figure}

In a theory of local quantum physics~\cite{Haag,AdvAQFT}, each of these regions $O_i$ (and indeed, 
every open relatively compact set) has a corresponding unital $*$-algebra 
$\Rgth(O_i)$ whose self-adjoint elements correspond to observables that can, in principle, be 
measured by experiments conducted within $O_i$. For our present discussion, the $\Rgth(O_i)$
will be von Neumann algebras, concretely represented on a Hilbert space $\HH$, and sharing the
Hilbert space identity operator as their common unit. For the moment, $\HH$ will not
be assumed separable.
As any experiment conducted within $O_1$ is \emph{a fortiori} conducted within $O_2$, 
we require $\Rgth(O_1)\subset \Rgth(O_2)$. This property is known 
as \emph{isotony}. 

How is the `experimental independence' of regions $O_2$ and $O_3$ to be expressed mathematically? 
This question is much more subtle and has many more potential answers than one might first think (see~\cite{Summers:1990,Summers:2009,RedeiSummers:2010}). A starting point is the requirement that measurements can be made independently in $O_2$ and $O_3$. In elementary quantum mechanics one learns that observables are independently measureable
(commensurable) if and only if they commute, so one should certainly require that the algebras
for regions $O_2$ and $O_3$ commute elementwise,
\begin{equation}\label{eq:Einstein}
[\Rgth(O_2),\Rgth(O_3)] = 0,
\end{equation}
which is sometimes called \emph{Einstein causality}.\footnote{Fermi fields, of course, 
anticommute at spacelike separation, and are excluded from the algebra of \emph{observables}, though operators constructed from products of even numbers of Fermi fields qualify as observables.} However, experiments not only concern measurements, but also involve a stage of preparation (corresponding to a choice of a state); what we need, then, is a way of
describing the possibility for experimenters in spacelike separated regions to make both preparations
and measurements independently. 

This is where the split property enters. It turns out to be important to ensure that the regions
in question do not touch at their boundaries, and therefore we switch attention to the independence
of regions $O_1$, surrounded by a `collar region' provided by $O_2$, and $O_3$. 
We describe the inclusion  $\Rgth(O_1)\subset \Rgth(O_2)$ as
\emph{split} if there is a type I von Neumann factor $\Ngth$ such that
\begin{equation}\label{eq:split}
\Rgth(O_1)\subset \Ngth\subset \Rgth(O_2).
\end{equation}
Here, $\Ngth$ is a \emph{factor} if $\Ngth\cap\Ngth'=\CC\II$, where the prime denotes the 
commutant, and the designation as a `type I factor' means that $\Ngth$ is isomorphic as a von Neumann algebra to the algebra
$\BB(\KK)$ of bounded operators on a [not necessarily separable] Hilbert space $\KK$.
The \emph{split property} requires that whenever $O_1\Subset O_2$, i.e., that 
the closure of $O_1$ is compactly contained in $O_2$, then the inclusion
$\Rgth(O_1)\subset \Rgth(O_2)$ is split in this way. Returning to our configuration 
in Fig.~\ref{fig:regions}, the split property has the important consequence that $\Rgth(O_1)$ and $\Rgth(O_3)$ are \emph{$W^*$-statistically independent}:
for all normal states\footnote{A normal state on a von Neumann algebra can be defined
abstractly in terms of its continuity properties; however, when the von Neumann algebra acts on a Hilbert space, the normal states are precisely those that can be represented by density matrix states on the Hilbert space  \cite[Thm 2.4.21]{BratRob}.} $\varphi_1$ 
and $\varphi_3$ on these algebras there is a normal state $\varphi$ 
on the von Neumann algebra $\Rgth(O_1)\vee \Rgth(O_3)$ they generate so that
\begin{equation}
\varphi(AB) = \varphi_1(A)\varphi_3(B) \qquad
A\in\Rgth(O_1), B\in\Rgth(O_3).
\end{equation}
This asserts that any pair of preparations made by the experimenters in regions $O_1$ and $O_3$ can be subsumed into a preparation of the system as a whole. 

The proof of this statement is straightforward and illuminates some related issues. Because $\Ngth$ is a type I factor, there is a Hilbert space isomorphism $U:\HH\to\KK\otimes\KK'$ such that $U\Ngth U^{-1}=\BB(\KK)\otimes\II_{\KK'}$ (see, e.g., \cite[III.1.5.3]{Blackadar}). 
Owing to the inclusion $\Rgth(O_1)\subset\Ngth$ and the fact that $\Rgth(O_3)\subset \Rgth(O_1)'$,
one finds
\begin{align}\label{eq:inclusions}
 U\Rgth(O_1)U^{-1}&\subset \BB(\KK)\otimes\II_{\KK'}, \nonumber\\  
U\Ngth U^{-1}& =\BB(\KK)\otimes\II_{\KK'}, \nonumber\\ 
U\Rgth(O_3)U^{-1} &\subset \II_{\KK}\otimes\BB(\KK').
\end{align}
We now have faithful normal representations $\pi_i$ of $\Rgth(O_i)$ with $\pi_1$ 
acting on $\KK$ and $\pi_3$ on $\KK'$, so that 
\[
\pi_1(A)\otimes\II_{\KK'}=UAU^{-1}, \qquad
\II_\KK\otimes\pi_3(B)=UBU^{-1}
\]
for $A\in\Rgth_1(O)$, $B\in\Rgth(O_3)$. The unitary $U$ clearly implements a spatial
isomorphism of $\Rgth(O_1)\vee \Rgth(O_3)$ with the spatial tensor product $\pi_1(\Rgth(O_1))\bar{\otimes}\pi_3(\Rgth(O_3))$
and, in particular, the map $AB\mapsto A\otimes B$ extends to an
isomorphism of von Neumann algebras 
\begin{equation}\label{eq:tensor}
\Rgth(O_1)\vee \Rgth(O_3) \cong \Rgth(O_1)\bar{\otimes} \Rgth(O_3).
\end{equation}
(In the converse direction, we note that if an isomorphism of this form exists and is spatial, 
then the inclusion $\Rgth(O_1)\subset \Rgth(O_3)'$ is split: simply define $\Ngth$
using the second line of \eqref{eq:inclusions}.) 

Returning to the question of $W^*$-statistical independence and
representing the normal states $\varphi_i$ by density matrices $\rho_i$ so that $\varphi_i(X)=\Tr \rho_i X$, ($X\in\Rgth(O_i)$), we obtain density matrices $\tilde{\rho}_1$ on $\BB(\KK)$ and $\tilde{\rho}_3$ on $\BB(\KK')$ by the partial traces
\begin{equation}
\Tr \tilde{\rho}_1 K=\Tr \rho_1 U^{-1}(K\otimes\II_{\KK'})U, \qquad \Tr \tilde{\rho}_3 K'=\Tr \rho_3 U^{-1}( \II_{\KK}\otimes K')U
\end{equation}
for $K\in\BB(\KK)$, $K'\in\BB(\KK')$.  
Then the desired product state is $\varphi(C) = \Tr \rho C$ with density matrix 
\begin{equation}
\rho= U^{-1}\left(\tilde{\rho}_1\otimes\tilde{\rho}_3\right) U.
\end{equation}
For, taking $A\in\Rgth(O_1)$ and $B\in\Rgth(O_3)$, and writing
$UAU^{-1}=\tilde{A}\otimes\II_{\KK'}$, $UBU^{-1}=\II_{\KK}\otimes \tilde{B}$,
\begin{align}
\Tr \rho AB &= \Tr U^{-1} \left(\tilde{\rho}_1\otimes\tilde{\rho}_3\right) U AB = \Tr  \left(\tilde{\rho}_1\otimes\tilde{\rho}_3\right) \tilde{A}\otimes\tilde{B}=
\left( \Tr \tilde{\rho}_1 \tilde{A}\right)\left(\Tr\tilde{\rho}_3 \tilde{B}\right) \nonumber \\
&=\left( \Tr \rho_1 A\right)\left(\Tr\rho_3 B\right).
\end{align}

The idea that suitable inclusions of local algebras should split is due to Borchers, but the first
proof of the split property (already described as ``an old conjecture of Borchers'') was given in 1974 by Buchholz, in the particular case of the massive free scalar field~\cite{Buc:1974}. There matters rested for some time -- an extension to the observable algebras for the Dirac and Maxwell fields, as well as the massless scalar, was given five years later in \cite{HorDad:1979}, and a related result for the \emph{field algebras} of the massive free fermion fields was given by Summers~\cite{Summers:1982} in 1982.\footnote{Einstein causality must be modified for field algebras of fermionic systems. If $O_2$ and $O_3$ are spacelike separated, the corresponding field algebras $\Fgth(O_i)$ obey a graded commutation relation in place of \eqref{eq:Einstein}; however, by
introducing a suitable unitary twist map $Z$ on $\HH$, and defining the twisted algebra $\Fgth^t(O_3)=
Z\Fgth(O_3)Z^{-1}$, Einstein causality can be reformulated as  $[\Fgth(O_2),\Fgth^t(O_3)]=0$. Then the
relevant form of the split property is that the inclusion $\Fgth(O_1)\subset \Fgth^t(O_3)'$ splits.}
There were then two breakthroughs. First, the theory of split inclusions were studied in a deep
paper of Doplicher and Longo~\cite{DopLon:1984}, showing that the split property has
numerous important consequences, some of which will be discussed below; second, the development of \emph{nuclearity}
criteria~\cite{BucWic:1986} permitted the proof of the split property for general models with
sufficiently good nuclearity properties~\cite{BucDAnFre:1987}. This was a significant step, 
because nuclearity is closely related to questions of thermodynamic stability~\cite{BucWic:1986,BucJun:1986,BucJun:1989} and so it became apparent that -- in 
keeping with Borchers' conjecture -- the split property should indeed be a feature of
suitably well-behaved quantum field theory models. 

The nuclearity criterion used in~\cite{BucDAnFre:1987} is defined as 
follows. First, let us recall that a linear map between Banach
spaces $\Xi:\Xc\to\Yc$ is said to be \emph{nuclear} if there is
a countable decomposition $\Xi(\cdot)=\sum_k  \ell_k(\cdot)\psi_k$ where
$\psi_k\in\Yc$ and $\ell_k\in\Xc^*$, such that the sum $\sum_k \|\psi_k\|\,\|\ell_k\|$ is convergent. Under these circumstances, the \emph{nuclearity index} $\|\Xi\|_1$ is defined as the infimum of the value of this sum over all possible decompositions. 

Now let $O$ be a nonempty open and bounded region of
Minkowski spacetime, with associated von Neumann algebra $\Rgth(O)$. 
Denote the vacuum state vector by $\Omega\in\HH$ and the Hamiltonian, generating time translations, by $H$, so that $U(\tau)=e^{iH\tau}$ satisfies 
\begin{equation}
U(\tau)\Rgth(O) U(\tau)^{-1}= \Rgth(O_\tau),
\end{equation}
where $O_\tau$ is the translation of $O$ under $(t,\xb)\mapsto (t+\tau,\xb)$. The nuclearity criterion of~\cite{BucDAnFre:1987} is that for every $O$ and $\beta>0$, 
each map $\Xi_{O,\beta}:\Rgth(O)\to\HH$ given by
\begin{equation}\label{eq:Xi_def}
\Xi_{O,\beta}(A) = e^{-\beta H} A\Omega
\end{equation}
is nuclear and has nuclearity index obeying
\begin{equation}\label{eq:nuclearity_criterion}
\|\Xi_{O,\beta}\|_1 \le e^{(\beta_0/\beta)^n}
\end{equation}
as $\beta\to 0+$, where the constants $n>0$ and $\beta_0>0$
may depend on $O$ (but not $\beta$). The physical
understanding of these criteria is that $\|\Xi_{O,\beta}\|_1$ plays
the role of a local partition function 
and probes the state space available on given distance and energy scales; see Section~\ref{sec:nuc} for other aspects of the
interpretation.
Note that there is a variety of nuclearity conditions (see~\cite{BucPorr:1990} for discussion) some of which should
be used with care~\cite{FOP}. Given this definition, one has:
 
\begin{theorem}[\cite{BucDAnFre:1987}]\label{thm:nuctosplit}
Suppose that the net of von Neumann algebras $O\mapsto \Rgth(O)$ obeys isotony, that the Hamiltonian is nonnegative, with
zero eigenspace spanned by $\Omega$, and that the above nuclearity 
criterion holds. Then for any open bounded regions with $O_1\Subset O_2$, 
the inclusion $\Rgth(O_1)\subset\Rgth(O_2)$ is split. 
\end{theorem}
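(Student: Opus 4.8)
The plan is to reduce the assertion to the existence of the interpolating type~I factor $\Ngth$ of \eqref{eq:split}, and to manufacture $\Ngth$ from finite-rank approximations supplied by the nuclearity hypothesis, using the geometric gap between $O_1$ and $O_2$ to trade the energy damping $e^{-\beta H}$ for control of the spatial structure of the inclusion.

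First I would fix the functional-analytic setting. The spectral assumptions---that $H\ge 0$ with zero eigenspace exactly $\CC\Omega$---place us in the Reeh--Schlieder situation, so that $\Omega$ is cyclic and separating for every $\Rgth(O)$ with $O$ nonempty and open. The inclusion $\Rgth(O_1)\subset\Rgth(O_2)$ is therefore \emph{standard}, with common cyclic and separating vector $\Omega$, and by the converse remark following \eqref{eq:tensor} the split property may be recast as the statement that $\Omega$ induces a spatial isomorphism $\Rgth(O_1)\vee\Rgth(O_2)'\cong\Rgth(O_1)\bar{\otimes}\Rgth(O_2)'$; it then suffices to exhibit this isomorphism, since the second line of \eqref{eq:inclusions} furnishes the factor once it is in place.

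The technical core is the passage from the \emph{assumed} energy nuclearity---nuclearity of the maps $\Xi_{O,\beta}$ of \eqref{eq:Xi_def} subject to the growth bound \eqref{eq:nuclearity_criterion}---to this spatial factorisation. Here the hypothesis $O_1\Subset O_2$ is indispensable: since $\overline{O_1}$ is compactly contained in $O_2$, there is a nonempty collar $O_2\setminus\overline{O_1}$, and the guiding intuition (made precise through finite propagation speed together with positivity of the energy) is that spatial separation across this collar behaves as an effective ultraviolet cutoff. Concretely I would fix $\beta>0$, choose a nuclear decomposition $\Xi_{O_2,\beta}(A)=\sum_k\ell_k(A)\psi_k$ with $\sum_k\|\ell_k\|\,\|\psi_k\|$ close to $\|\Xi_{O_2,\beta}\|_1$, and truncate it to $N$ terms, obtaining a finite-rank map that factors through $\CC^N$. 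The quantitative bound \eqref{eq:nuclearity_criterion} is exactly what permits one to balance the truncation length $N$ against $\beta$ as $\beta\to0+$, so that the damping can be removed in a controlled limit rather than leaving an uncontrolled high-energy tail.

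The factor is then assembled from these finite-rank data: each truncation factors through a finite-dimensional space, and the associated projections---transported into the algebra using $\Omega$ together with the relation $\Rgth(O_2)'\subset\Rgth(O_1)'$---generate type~I subalgebras lying between $\Rgth(O_1)$ and $\Rgth(O_2)$; refining the truncation and passing to the limit should produce the type~I \emph{factor} $\Ngth$ demanded by \eqref{eq:split}. The step I expect to be the main obstacle is precisely this bridge: energy nuclearity measures how the dynamics $e^{-\beta H}$ compresses the local state space, whereas the split property is a statement about factorisation \emph{across} the inclusion, and reconciling the two is no soft functional-analytic manoeuvre---it must genuinely exploit the geometry of the collar $O_2\setminus\overline{O_1}$ and the interplay of the translations $U(\tau)=e^{iH\tau}$ with the net. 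Securing sufficiently uniform control of the nuclearity index as $\beta\to0+$, so that the approximations converge to a bona fide type~I factor and not merely to a compact perturbation of one, is the delicate point at which the precise exponential form of \eqref{eq:nuclearity_criterion} earns its keep.
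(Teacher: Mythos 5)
You should first note that the paper does not actually prove Theorem~\ref{thm:nuctosplit}: it is quoted from \cite{BucDAnFre:1987}, so your proposal has to be measured against that argument. Your opening reduction is fine --- Reeh--Schlieder gives standardness of the inclusion, and it suffices to exhibit either an intermediate type I factor or a spatial isomorphism with a tensor product, as in the converse remark after \eqref{eq:tensor}. The gap lies in the core construction. Truncating a nuclear decomposition of $\Xi_{O_2,\beta}$ to $N$ terms yields a finite-rank map $\Rgth(O_2)\to\HH$, but the data of that map are functionals $\ell_k\in\Rgth(O_2)^*$ and vectors $\psi_k\in\HH$: neither lives in the net, and you give no mechanism by which they produce projections in $\Rgth(O_2)$, let alone von Neumann algebras sitting between $\Rgth(O_1)$ and $\Rgth(O_2)$. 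Worse, even granting such a family of intermediate type I subalgebras, your final step --- ``refining the truncation and passing to the limit'' --- fails as a matter of von Neumann algebra theory: a weak limit or increasing union of type I factors need not be type I. The paper itself makes this point in Section~\ref{sec:split}: writing $\Rgth(O)=\bigvee_k\Ngth_k$ for nested type I factors $\Ngth_k$ is exactly how one concludes that local algebras are \emph{hyperfinite type III${}_1$} factors \cite{Haagerup:1987}. So the limiting procedure you propose is precisely the one that produces type III, not type I, algebras, and no uniformity in the bound \eqref{eq:nuclearity_criterion} can repair this.

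The argument of \cite{BucDAnFre:1987} runs in a different direction. The collar enters through the elementary observation that $U(t)\Rgth(O_1)U(t)^{-1}\subset\Rgth(O_2)$ for sufficiently small $|t|$, which, combined with positivity of $H$, supplies analyticity in the time parameter; the nuclear decompositions, with the growth bound \eqref{eq:nuclearity_criterion} as $\beta\to 0+$, are then used to prove an \emph{estimate}, namely that the product functional $AB'\mapsto\ip{\Omega}{A\Omega}\ip{\Omega}{B'\Omega}$ on $\Rgth(O_1)\vee\Rgth(O_2)'$ is normal. Once a normal product state across the standard inclusion is in hand, the machinery of standard split inclusions \cite{DAntLon:1983,DopLon:1984} --- essentially the converse statement you invoke, together with the canonical construction described in Section~\ref{sec:split} --- delivers the intermediate type I factor $\Ngth$ of \eqref{eq:split} in one stroke. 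In short: the type I factor is obtained canonically from a normal product state, never as a limit of finite-dimensional approximants; the finite-rank structure of the nuclear maps enters only through norm estimates establishing normality. Your proposal correctly identifies the bridge from dynamical compression to transverse factorisation as the crux, but the bridge you sketch cannot carry the load.
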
   

Let us now turn to the theory of standard split inclusions~\cite{DopLon:1984}. Consider
von Neumann algebras $\Rgth_1$ and $\Rgth_2$ acting on a Hilbert space $\HH$. A triple $(\Rgth_1, \Rgth_2,\Omega)$, where $\Omega\in\HH$, is said to be a \emph{standard split inclusion} if $\Rgth_1\subset \Rgth_2$ is split, and $\Omega$ is cyclic and separating for each of $\Rgth_1$, $\Rgth_2$ and $\Rgth_2\wedge \Rgth_1'$ (and hence for their commutants).  In the case where $\Omega$ is the
vacuum vector of a quantum field theory, the latter assumptions are met as a consequence of the
Reeh--Schlieder theorem~\cite{ReehSchlieder:1961}. A number of remarkable results are proved in~\cite{DopLon:1984}. For instance (setting aside a trivial case $\Rgth_2=\CC\II$) one finds that both the von Neumann algebras and the Hilbert space are
substantially constrained: the $\Rgth_i$ are properly infinite and have separable preduals, while the Hilbert space $\HH$ must be separable. 

There is also a canonical choice of type I factor: because $\Rgth_1\subset\Rgth_2$ is split, there is 
(cf.~\eqref{eq:tensor}) a von Neumann algebra isomorphism 
\begin{align}
\phi:\Rgth_1\vee\Rgth_2' &\to \Rgth_1\bar{\otimes}\Rgth_2\nonumber\\
AB' &\mapsto A\otimes B'.
\end{align} 
As $\Omega$ is cyclic and separating for $\Rgth_1\vee\Rgth_2'$ and $\Omega\otimes\Omega$ is
cyclic and separating for $\Rgth_1\bar{\otimes}\Rgth_2$ acting on $\HH\otimes\HH$,
there is, by Tomita--Takesaki theory, a natural choice of unitary $U:\HH\to\HH\otimes\HH$
implementing $\phi$, $UAB'U^{-1}=A\otimes B'$ for all $A\in\Rgth_1$, $B'\in\Rgth_2'$. 
Thus we have
\begin{align}
U\Rgth_1U^{-1} &= \Rgth_1\otimes\II_\HH, \nonumber\\
U\Rgth_2'U^{-1} &= \II\otimes \Rgth_2', \nonumber\\
U\Rgth_2U^{-1} &= \BB(\HH)\otimes\Rgth_2,
\end{align}
and $\Ngth=U^{-1}(\BB(\HH)\otimes\II_\HH) U$ is the canonical choice of an intermediate type I factor.

An important physical application arises as follows. Suppose that there is a unitary group representation $G\owns g\mapsto V(g)$ on $\HH$, acting so that $V(g)\Rgth_i V(g)^{-1}=\Rgth_i$ for each $i=1,2$, $g\in G$. 
Then, defining $W(g)=U^{-1}(V(g)\otimes\II)U$, we have a new unitary representation of $G$
acting on elements $\Rgth_1\vee\Rgth_2'$ by
\begin{equation}
W(g)AB' W(g)^{-1} = U^{-1}(V(g)\otimes\II)(A\otimes B')(V(g)^{-1}\otimes\II)U = V(g)AV(g)^{-1}B'
\end{equation}
for $A\in\Rgth_1$, $B'\in\Rgth_2$. Moreover, $W(g)\in U^{-1}(\BB(\HH)\otimes\II)U\subset\Rgth_2$. 
In the context of a quantum field theory obeying the split property, this corresponds immediately
to the situation in which $G$ is a group of global gauge transformations, implemented by $V(g)$.
Given any nested pair $O_1\Subset O_2$ one may then obtain a localised representation
$W(g)\in\Rgth(O_2)$ which agrees with $V(g)$ on $\Rgth(O_1)$ but acts trivially on $\Rgth(O_2)'$. 
This again emphasises the way in which the split property permits the physics of
the inner region to be isolated from that of the causal complement of the outer region.
The generators of $W(g)$ can be interpreted as suitable smearings of a conserved local current associated to the global symmetry, thus providing an abstract version of Noether's theorem (note that
there is no assumption that the theory derives from a Lagrangian)~\cite{DopLon:1983}.

There is an interesting result in the converse direction. Suppose $\Rgth_1\subset\Rgth_2$, 
and that the flip automorphism $\sigma:A\otimes B\mapsto B\otimes A$ of $\Rgth_1\otimes\Rgth_1$
is inner with respect to $\Rgth_2\otimes\Rgth_2$ -- i.e., $\sigma$ agrees on $\Rgth_1\otimes\Rgth_1$ with $\Ad U$ for some
unitary $U\in\Rgth_2\otimes\Rgth_2$. Then one has
\begin{equation}
U(\II\otimes AB')U^{-1} = U(\II\otimes A )U^{-1} U(\II\otimes B')U^{-1} = A\otimes B'
\end{equation}
for $A\in\Rgth_1$, $B'\in\Rgth_2'$, where we use the fact that $\Ad U$ implements the flip on 
$\Rgth_1\otimes\Rgth_1$ and (as $U\in\Rgth_2\otimes\Rgth_2$)  acts trivially on $\II\otimes B'
\in\Rgth_2'\otimes\Rgth_2'$. This establishes a von Neumann algebra isomorphism $\Rgth_1\vee\Rgth_2'\to\Rgth_1\bar{\otimes}\Rgth_2'$
extending $AB'\mapsto A\otimes B'$.  If, in addition, there is a cyclic and separating vector
for $\Rgth_1'\wedge\Rgth_2$, then one can establish a spatial
isomorphism between $\Rgth_1\vee\Rgth_2'$ and $\pi_1(\Rgth_1)\bar{\otimes}\pi_2(\Rgth_2)$, where the $\pi_i$ are
faithful normal representations of the $\Rgth_i$ with cyclic and separating vectors.
It follows that the inclusion $\Rgth_1\subset\Rgth_2$ is split (moreover, under these 
circumstances, if the inclusion is split, then the flip is inner with respect to $\Rgth_2\otimes\Rgth_2$)~\cite{DAntLon:1983,DopLon:1984}. Clearly, the essential
point here is that $\sigma(1\otimes A)=A\otimes 1$, and the argument works for
more general automorphisms than just the flip.

D'Antoni and Longo used this idea in an ingenious proof of the split
property for the free scalar field. Let $O\mapsto\Rgth(O)$ be the
net of von Neumann algebras of the free massive scalar field on Minkowski
space in the vacuum representation. Then $O\mapsto\Rgth(O)\bar{\otimes}\Rgth(O)$ 
is the theory of two independent massive scalar fields, and has an internal $\SO(2)$ symmetry
group that rotates the doublet of fields, with a rotation through $\pi/2$ precisely inducing $\II\otimes A\mapsto A\otimes \II$. The gauge symmetry is associated with a Noether current, and 
it is then shown that suitable local smearings of these currents in a region $O_2$ 
generate a local implementation of the gauge symmetry on a smaller region $O_1\Subset O_2$
and acting trivially on $\Rgth(O_2)'$. The vacuum provides the cyclic and separating vector to permit the deduction that $\Rgth(O_1)\subset\Rgth(O_2)$ is split.
A similar idea has been employed recently~\cite{MorTom:2010} to make a more explicit
local implementation of the generators of gauge symmetries -- actually, a family of
possible implementations are constructed, with the `canonical' implementation
described above included as a special case. 
With appropriate modifications, local implementations of geometric symmetries can also 
be constructed~\cite{BucDopLon:1986,Carpi:1999}. 
 
One of the most striking results to emerge from the body of work on the split property
concerns the type of the local von Neumann algebras~\cite{BucDAnFre:1987}. In 1985, Fredenhagen showed that, under the hypothesis of a scaling limit, local algebras are of type III${}_1$
(for brevity, we assume here for that the local algebras are factors). 
Combining this with the split property and the Reeh--Schlieder theorem, and the
assumption that any $\Rgth(O)$ may be generated as $\Rgth(O)=\bigvee_k \Rgth(O_k)$ 
for some increasing chain of nested sets $O_k\Subset O_{k+1}$, one then has
$\Rgth(O)=\bigvee_k \Ngth_k$ for a sequence of nested type I factors. As the
Hilbert space is separable, this leads to the conclusion that each $\Rgth(O)$ is
a hyperfinite type III${}_1$ factor, thus fixing it uniquely up to isomorphism~\cite{Haagerup:1987}.
This is a remarkable achievement, demonstrating that the distinction between
different quantum field theories lies in the relationships between local algebras, rather than
the content of those algebras \emph{per se}. The timing of the results is also remarkable, with progress on the QFT side at exactly the moment that the structural results on von Neumann algebras appeared. 
  
Finally, let us note that the split property plays a key role in the construction of
certain integrable quantum field models -- see~\cite{Lech_chap:2015} for a review,
and that the interpretative framework of models described by funnels of type I factors
has been investigated in~\cite{BucSto:2014}.
 
\section{Nuclearity and Quantum Energy Inequalities}\label{sec:nuc}

The nuclearity criterion, while being physically well-motivated, 
is rather technical. In this section we explain more about its 
physical status and draw some links to the theory of quantum energy inequalities. For the purposes of illustration, we consider a 
class of theories, consisting of countably many independent 
free scalar fields with masses $m_r$ ($r\in\NN$) in $4$ spacetime dimensions --- this may also be related to a particular generalised free field~\cite{DopLon:1984}. For simplicity, we assume that $m_r$ form a nondecreasing sequence and that there is a mass gap, i.e., $m_1>0$. 

The models are constructed as follows. Each individual field has
a Hilbert space $\FF_r$ which is a copy of the symmetric Fock space
over $\HH=L^2(\RR^3,d^3\kb/(2\pi)^3)$, with vacuum vector $\Omega_r$. The annihilation and creation operators obey
\begin{equation}
[\sa_r(u),\sa_r(v)]=0, \qquad [\sa_r(u),\sa_r^*(v)]=\ip{u}{v}\II_{\FF_r}, 
\qquad (u,v\in \HH)
\end{equation}
and $\sa_r(u)\Omega_r=0$ for all $u\in\HH$.  
The smeared quantum field is
\begin{equation}
\Phi_r(f):= \sa_r(K_r\bar{f}) + \sa_r(K_r f)^*,
\end{equation}
where $K_r:\CoinX{\RR^4}\to \HH$ is defined by $(K_r f)(\kb)=(2\omega_r)^{-1/2}\hat{f}(\omega_r(\kb),\kb)$, with the Fourier transform $\hat{f}(k)=\int d^4x\, e^{ik\cdot x} f(x)$ 
and $\omega_r(\kb)=(\|\kb\|^2+m_r^2)^{1/2}$. Here
$k$ is a $4$-vector, and the $\cdot$ denotes the Lorentz contraction
in the $+---$ signature. For real-valued $f$, the operators
$\Phi_r(f)$ are essentially self-adjoint on a domain of finite particle
vectors and we use $\Phi_r(f)$ also to denote the closure. The
local von Neumann algebras $\Rgth_r(O)$ are then generated by the Weyl operators $\exp(i\Phi_r(f))$ as $f$ runs over real-valued elements of $\CoinX{O}$. The combined theory lives on the incomplete tensor product $\FF=\otimes_r\FF_r$ relative to $\Omega=\otimes_r \Omega_r$ and the overall local algebras $\Rgth(O)$ are formed by taking
the weak closure in $\FF$ of the algebraic tensor product $\odot_r\Rgth_r(O)$.    

For these models, we may give necessary and sufficient criteria
for nuclearity. 
\begin{proposition} Consider the theory of countably many scalar fields with mass gap. \\(a) A necessary condition for the theory to obey the nuclearity criterion  is that
\begin{equation}\label{eq:nuc_nec}
F(\beta):=\sum_r \frac{e^{-4\beta m_r}}{m_r^2}  
\end{equation} 
should be finite for all $\beta>0$, with $\log F(\beta)$ growing at most polynomially in $\beta^{-1}$ as $\beta\to 0+$.\\
(b) A sufficient condition for the theory to obey the nuclearity criterion  is that  
\begin{equation}
G(\beta):=\sum_r e^{-\beta m_r/4}
\end{equation}
should be finite for all $\beta>0$ and grows
only polynomially in $\beta^{-1}$ as $\beta\to 0+$. 
\end{proposition}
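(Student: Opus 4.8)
The plan is to exploit the product structure of the model to reduce the claim to one-particle estimates for a single free field, into which the Buchholz--Wichmann bounds for the free scalar field can then be inserted. Writing $H=\sum_r H_r$ for the total Hamiltonian, a product observable $A=\bigotimes_r A_r$ (with $A_r=\II$ for all but finitely many $r$) is mapped by $\Xi_{O,\beta}$ of \eqref{eq:Xi_def} to $\bigotimes_r e^{-\beta H_r}A_r\Omega_r=\bigotimes_r\Xi^{(r)}_{O,\beta}(A_r)$, where $\Xi^{(r)}_{O,\beta}$ is the corresponding map for the $r$-th field alone. The first step is therefore to establish that the nuclearity index is multiplicative across this tensor product, so that
\[
\log\|\Xi_{O,\beta}\|_1=\sum_r\log\|\Xi^{(r)}_{O,\beta}\|_1 .
\]
Submultiplicativity of $\|\cdot\|_1$ gives the bound ``$\le$'', while the reverse bound follows because the restriction of $\Xi_{O,\beta}$ to a single tensor factor has index exactly $\|\Xi^{(r)}_{O,\beta}\|_1$ and restriction cannot increase the index, so every finite sub-product is dominated by $\|\Xi_{O,\beta}\|_1$. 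Since $H_r\Omega_r=0$ forces each factor to be $\ge 1$, the criterion \eqref{eq:nuclearity_criterion} is then equivalent to finiteness and polynomial growth in $\beta^{-1}$ of the right-hand sum. A technical point I would treat carefully is the passage from the algebraic tensor product $\odot_r\Rgth_r(O)$ to its weak closure $\Rgth(O)$ without altering the index, using the density of finitely-supported products together with completeness of the nuclear maps under $\|\cdot\|_1$.

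The second step recalls the single-field input. By the Buchholz--Wichmann analysis, $\log\|\Xi^{(r)}_{O,\beta}\|_1$ is controlled, from above and below, by the trace of a one-particle operator assembled from $e^{-\beta\omega_r}$ and the localization map $K_r$ attached to $O$. Up to the detailed structure of the localization this trace has the shape $\int_O d^3x\int\frac{d^3\kb}{(2\pi)^3}\,e^{-c\beta\omega_r(\kb)}(2\omega_r(\kb))^{-p}$; since $\omega_r(\kb)\approx m_r$ near $\kb=0$, this is the source both of the exponential suppression $e^{-c\beta m_r}$ and of the inverse powers of $m_r$ that appear in \eqref{eq:nuc_nec}.

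For the sufficiency statement (b) I would bound the single-field index from above by $\int\frac{d^3\kb}{(2\pi)^3}e^{-\beta\omega_r(\kb)}\lesssim \beta^{-3/2}m_r^{3/2}e^{-\beta m_r}$ for $\beta m_r$ large, and then use the elementary inequality $\beta^{-3/2}m_r^{3/2}e^{-3\beta m_r/4}\le C\beta^{-3}$ (obtained by maximizing in $m_r$) to rewrite this as $\log\|\Xi^{(r)}_{O,\beta}\|_1\lesssim \beta^{-3}e^{-\beta m_r/4}$; the finitely many light fields with $\beta m_r$ small, whose number is at most $e\,G(\beta)$, contribute a further term of order $\beta^{-3}G(\beta)$. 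Summing over $r$ yields $\log\|\Xi_{O,\beta}\|_1\lesssim \beta^{-3}G(\beta)$, which grows polynomially in $\beta^{-1}$ whenever $G$ does, proving (b).

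For the necessity statement (a) I would use a lower bound on each factor. Restricting $\Xi^{(r)}_{O,\beta}$ to the one-particle states $e^{-\beta\omega_r}K_r f$ produced by Weyl operators of test functions supported in $O$ shows that nuclearity of $\Xi_{O,\beta}$ forces the associated one-particle operator to be trace class and gives $\log\|\Xi^{(r)}_{O,\beta}\|_1\gtrsim e^{-4\beta m_r}/m_r^2$, the factor $(2\omega_r)^{-1}\approx(2m_r)^{-1}$ in $K_r$ producing the inverse square and $e^{-2\beta\omega_r}\ge e^{-4\beta m_r}$ (on the relevant momentum range) the exponential. Summing and using the preceding display, $F(\beta)\lesssim\log\|\Xi_{O,\beta}\|_1\le(\beta_0/\beta)^n$, so $F$ is finite and bounded by a polynomial in $\beta^{-1}$; hence $\log F$ grows at most polynomially. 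The hard part is precisely this single-field input: obtaining clean two-sided control of $\log\|\Xi^{(r)}_{O,\beta}\|_1$ by a one-particle trace \emph{uniformly in the mass}, which requires pinning down the behaviour of $K_r$ in the infrared and at energies $\omega_r\approx m_r$. The mismatch between the exponents $4\beta$ and $\beta/4$, and the extra inverse powers of $m_r$, reflects exactly that the available lower and upper bounds on this trace are not sharp, leaving a genuine gap between the necessary and the sufficient condition.
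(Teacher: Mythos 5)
Your part (b) is essentially the paper's argument: submultiplicativity of the nuclearity index over the tensor factors, a single-field upper bound, and an elementary estimate converting it into $\log\|\Xi_{O,\beta}\|_1\lesssim \beta^{-p}G(\beta)$. The only caveat is that you treat the single-field input heuristically (a one-particle trace $\int d^3\kb\, e^{-\beta\omega_r}$), whereas the paper leans on the involved computation of Buchholz--Wichmann (in the form $\|\Xi^{(r)}_{O,\beta}\|_1\le \exp\{cR^3\beta^{-3}|\log(1-e^{-\beta m_r/2})|\}$) and then uses $\sup_{x>0}xe^{x/2}|\log(1-e^{-x})|<\infty$ together with the mass gap to reach $\exp\{CR^3 m_1^{-1}\beta^{-4}G(\beta)\}$; that computation is the real content and should be cited or reproduced, not re-derived by analogy.

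Part (a), however, contains a genuine gap. Your entire necessity argument rests on the exact multiplicativity
\begin{equation*}
\log\|\Xi_{O,\beta}\|_1=\sum_r\log\|\Xi^{(r)}_{O,\beta}\|_1,
\end{equation*}
and specifically on the supermultiplicative direction ``$\ge$''. The justification you give is circular: restriction monotonicity shows that the restriction of $\Xi_{O,\beta}$ to the subalgebra generated by a finite set $F$ of factors has index at most $\|\Xi_{O,\beta}\|_1$, but that restriction \emph{is} the finite tensor product $\bigotimes_{r\in F}\Xi^{(r)}_{O,\beta}$, so bounding its index below by $\prod_{r\in F}\|\Xi^{(r)}_{O,\beta}\|_1$ is exactly the supermultiplicativity you are trying to prove. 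Restriction to a \emph{single} factor does give $\|\Xi^{(r)}_{O,\beta}\|_1\le\|\Xi_{O,\beta}\|_1$, but only the supremum over $r$, never the product. Supermultiplicativity is not a general property of nuclearity indices (decompositions of a tensor-product map need not factorize, and ``entangled'' decompositions can be more efficient), which is precisely why lower bounds on nuclearity indices are delicate. The paper circumvents this entirely with the Buchholz--Porrmann orthogonality device: if $A_r\in\Rgth(O)$, $\|A_r\|\le 1$, are chosen (one Weyl operator per field species) so that the vectors $\Xi_{O,\beta}(A_r)$ are mutually orthogonal, then for \emph{any} nuclear decomposition one gets
\begin{equation*}
\sum_r\|\Xi_{O,\beta}(A_r)\|^4\le\|\Xi_{O,\beta}\|_1^2,
\end{equation*}
and the single-excitation estimate $\|\Xi_{O,\beta}(A_r)\|^4\ge C e^{-4\beta m_r}/m_r^2$ then yields $CF(\beta)\le\|\Xi_{O,\beta}\|_1^2$ directly, with no factorization of the index over species needed. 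Note also that this explains the specific form of $F(\beta)$: the exponent $4$ and the $m_r^{-2}$ arise as the fourth power of the norm of a single orthogonal excitation, not from crude bounds $e^{-2\beta\omega_r}\ge e^{-4\beta m_r}$ on a one-particle trace as in your sketch. To repair your proof you would need to replace the multiplicativity step by this (or an equivalent) global lower-bound mechanism.
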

\begin{proof}[Outline of the proof]  To begin, let us develop a lower bound on the nuclearity index following~\cite[\S 2]{BucPorr:1990}. Suppose $\Xi_{O,\beta}$ is
nuclear with a decomposition of the form $\sum_k  \ell_k(\cdot)\psi_k$,
and note that the operator norm obeys $\|\Xi_{O,\beta}\|\le 1$ by \eqref{eq:Xi_def}.  
Suppose further that one can find a sequence $A_r\in\Rgth(O)$
so that the $\Xi_{O,\beta}(A_r)$ are orthogonal in $\FF$, 
assuming without loss that $\|A_r\|\le 1$. Then
\begin{equation}
\sum_r \|\Xi_{O,\beta}(A_r)\|^4 =\sum_r \left|\sum_{k} \ell_k(A_r)\ip{\Xi_{O,\beta}(A_r)}{\psi_k}\right|^2 = \left\|\sum_k B_k \psi_k\right\|^2 
\le\left(\sum_k \|\ell_k\|\, \|\psi_k\|\right)^2
\end{equation}
using orthogonality and noting that $B_k:=\sum_r \ell_k(A_r) \|\Xi_{O,\beta}(A_r)\|^{-1} \ket{\Xi_{O,\beta}(A_r)}\bra{\Xi_{O,\beta}(A_r)}$ has norm $\|B_k\|\le \|\ell_k\|$, given our assumptions on the norms of $A_r$ and $\Xi_{O,\beta}$. 
Considering all possible decompositions, this shows that
\begin{equation}
\sum_r \|\Xi_{O,\beta}(A_r)\|^4 \le   \|\Xi_{O,\beta}\|_1^2.
\end{equation} 

Fixing $O$, Buchholz and Porrmann~\cite{BucPorr:1990} construct a sequence $A_r$ in $\Rgth(O)$, built from Weyl operators, so that the $\Xi_{O,\beta}(A_r)$ are orthogonal in $\FF$ and  $\| \Xi_{O,\beta}(A_r)\|^4 \ge C e^{-4\beta m_r}/m_r^2$
for sufficiently large $r$. This establishes part (a). 

On the other hand, we note that in the full  theory
the nuclear maps $\Xi_{O,\beta}$ are simply tensor products of
the maps for each $m_r$, and that the nuclearity index of the 
combined theory is therefore bounded above by the product of the
nuclearity indices of the individual theory. An involved computation (see \cite[\S 5]{BucWic:1986}
for the original version in a slightly different formulation of nuclearity, or \cite[\S 17.3]{BaumWollen:1992} 
in the present version) furnishes an upper bound
\begin{equation}
\|\Xi_{O,\beta}\|_1 \le \exp \left\{c\frac{R^3}{\beta^3}\sum_r |\log(1-e^{-\beta m_r/2})|\right\}
\end{equation}
if $O$ is a diamond whose base is a ball of radius $R>m_1^{-1}$, and where
$c>0$ is a numerical constant independent of $r,\beta,m_r$. 
Using the fact that $\sup_{x>0} x e^{x/2}|\log(1-e^{-x})|<\infty$, this bound can be estimated above
to give
\begin{equation}
\|\Xi_{O,\beta}\|_1 \le \exp \left\{C\frac{R^3}{m_1\beta^4}\sum_r e^{-\beta m_r/4}\right\}
\end{equation}
and part (b) is thereby proved.  
\end{proof}
In particular, one sees that the mass spectrum
\begin{equation}\label{eq:bad_mn}
m_r=(2d_0)^{-1}\log(r+1)
\end{equation}
for fixed $d_0>0$ is inconsistent with nuclearity, because $F(\beta)$ is infinite for $\beta<\frac{1}{2}d_0$, while
a mass spectrum $m_r=r m_1$ is easily seen to satisfy the sufficient
condition for nuclearity. 

More insight is obtained by defining the counting function
\begin{equation}\label{eq:Nu}
N(u) = \sum_{r} \vartheta(u-m_r)
\end{equation}
of fields with mass below $u$, for we have
\begin{equation}
G(\beta)=\int_0^\infty e^{-\beta u/4}dN(u) = \frac{\beta}{4}\int_0^\infty e^{-\beta u/4}N(u)\,du.
\end{equation}
Thus by a direct computation, polynomial boundedness of $N(u)$
as $u\to\infty$ is sufficient for nuclearity. Note that the example given by \eqref{eq:bad_mn} corresponds to an exponentially growing counting function. On the other hand, if nuclearity holds then we have 
$\sum_r e^{-\beta m_r/2}/m_r^2\le e^{(\beta_0/\beta)^n}$
as $\beta\to 0+$ (we have absorbed some constants into $\beta_0$, 
but the value of $n$ is as in the nuclearity criterion \eqref{eq:nuclearity_criterion}). Thus  
\begin{equation}
\sum_r e^{-\beta m_r } =
\sum_r m_r^2 e^{-\beta m_r/2 }\frac{e^{-\beta m_r/2 }}{m_r^2} \le \frac{A}{\beta^2} e^{(\beta_0/\beta)^n}
\end{equation}
as $\beta\to 0+$ for some $A>0$ and therefore  
\begin{equation}
N(v) \le \sum_r e^{\beta (v-m_r) }
\le  \frac{A}{\beta^2} e^{(\beta_0/\beta)^{n}+ \beta v}
\end{equation}
for any $v,\beta>0$. We are free to optimise the right-hand
side by choice of $\beta$.\footnote{This is a well-known
technique in the theory of Tauberian estimates of sums and integrals, see e.g., ~\cite{Odlyzko:1992}.} An exact minimisation of the right-hand side over $\beta$ is awkward,  but we may certainly substitute $\beta=\beta_0^{n/(n+1)} (n/v)^{1/(n+1)}$ [which minimises the exponential factor] yielding
\begin{equation}\label{eq:Nbound}
N(v) \le B v^{2/(n+1)} e^{C v^{n/(n+1)}}
\end{equation}
for positive constants $B$ and $C$. Therefore, nuclearity implies
a sub-exponential growth in $N$, and is implied by polynomial growth. 

It is important to understand that the criteria just given are correlated with physical properties.
For example, Buchholz and Junglas~\cite{BucJun:1986} showed that convergence
of the sum $\sum_{n} e^{-\beta m_n/2}$ is sufficient for 
the thermal equilibrium state $\omega_\beta$ at inverse temperature $\beta$ to be locally normal\footnote{That is, its restriction to any local von Neumann algebra (in the vacuum representation) of a relatively compact region is normal.} to the vacuum state; on the other hand, if 
$\omega_\beta$ is locally normal, then $\sum_{n} e^{-2\beta m_n}$ converges. 
Therefore, local normality for all $\beta>0$ is equivalent to finiteness of $F(\beta)$,
which indicates a link between nuclearity criteria and the good thermodynamic behaviour. 
This suggestion was made precise in the general context, also by Buchholz and Junglas~\cite{BucJun:1989}.

Theories in which $F(\beta)$ diverges for sufficiently small $\beta$
have interesting behaviour relative to the split property. For
example, the mass spectrum~\eqref{eq:bad_mn} produces
a theory in which one has split inclusions of $\Rgth(O_1)\subset\Rgth(O_2)$ only when $O_2$ is sufficiently
larger than $O_1$: the so-called \emph{distal split property}. 
To be precise,  
consider the case where $O_i$ ($i=1,2$) are Cauchy developments
of concentric open balls with radii $r_i$ lying in a common spacelike hyperplane. Then one defines a \emph{splitting distance} $d(r_1)$ to be the infimum of $r>0$ for which \eqref{eq:split} holds with $r_2 = r+r_1$.
In the model~\eqref{eq:bad_mn} it can be shown~\cite[Thm 4.3]{DAnDopFreLon:1987} that the splitting distance obeys $d_0\le d(r) \le 2d_0$ for all $r>0$. Thus the 
inverse splitting distance is $d(r)^{-1}$ of the same order as the maximum temperature for which locally normal equilibrium states exist.  

A second (and more quantitative) illustration of the significance of nuclearity criteria is provided by quantum energy inequalities (QEIs). In classical field theory, the models typically studied have everywhere nonnegative energy densities according to inertial observers. By contrast, no quantum field theory obeying the
standard assumptions can admit an energy density with nonnegative 
expectation values in all physical states (and vanishing in the vacuum state)~\cite{EGJ}. However, as first suggested by Ford~\cite{Ford78}, 
the extent to which energy densities can remain negative turns out
to be constrained by bounds -- the quantum energy inequalities --
in a number of models (see~\cite{Few:Bros,Fews_AEIlectures:2012} for
reviews). For instance, the (Wick ordered) energy density $\rho_m$ for a single free scalar field of mass $m$ obeys a lower bound 
\begin{equation}
\int \ip{\Psi}{\rho_m(t,\xb)\Psi} |g(t)|^2\,dt \ge -C \int_m^\infty u^d |\hat{g}(u)|^2\,du
\end{equation}
for all normalised Hadamard vector states $\Psi$, where $g\in\CoinX{\RR}$ 
and the constant $C$ depends on the spacetime dimension $d$, but not on $m$, $g$ or $\Psi$ \cite{FewsterEveson:1998,Fews00}.\footnote{Much more general results are obtained in~\cite{Fews00}. The Fourier transform is defined here by $\hat{g}(u)=\int dt\,e^{-iut}g(t)$.} 
The convergence of the integral follows because $\hat{g}(u)$ decays
faster than any inverse power. 
Therefore, the combined energy density $\rho$ of the theory with countably many fields obeys
\begin{equation}\label{eq:towerQEI}
\int \ip{\Psi}{\rho(t,\xb)\Psi} |g(t)|^2\,dt \ge -C \int_0^\infty u^d N(u) |\hat{g}(u)|^2\,du,
\end{equation}
where $N$ is defined in \eqref{eq:Nu} and $\Psi$
is any state in the space $\St$ of finite
linear combinations of tensor product states $\bigotimes_j\psi_j$ 
in which all but finitely many of the $\psi_j$ are in the vacuum state
(for mass $m_j$) and so that the $\psi_j$ are all Hadamard. Thus a theory in which $N$ has polynomial growth (hence obeying nuclearity) produces a well-behaved QEI, with a finite lower bound for any $g\in\CoinX{\RR}$. On the other hand, 
if $N$ grows exponentially (so the theory fails to obey nuclearity) then the bound is divergent for every nontrivial $g\in\CoinX{\RR}$.\footnote{If the
transform $\hat{g}$ decays exponentially then $g$ extends to an
analytic function in a neighbourhood of the real axis; as $g$ is compactly
supported, it would then follow that $g\equiv 0$.}  Moreover, 
specialising to $d=4$, the following result is proved in the Appendix. 
\begin{theorem}\label{thm:mbound}
Let $m_0\ge 0$ be fixed.  Suppose that $f\in\CoinX{\RR}$ is nonnegative, even, and has a Fourier transform (which is necessarily real and even) that is  also nonnegative and bounded from below on $[m_0,\infty)$  
\begin{equation}\label{eq:fhat}
\hat{f}(u)\ge \varphi(|u|),
\end{equation}
where $\varphi:[m_0,\infty)\to\RR^+$ is monotone decreasing. 
Then the Klein--Gordon field of mass $m>m_0$ admits a Hadamard
state, given by a normalised Fock-space vector $\Psi_m$, such that 
\begin{equation}
\int \ip{\Psi_m}{\rho_m(t,\Ob)\Psi_m} f(t)\,dt \le -\Gamma m^4
\varphi(2\sqrt{2}m)^2,
\end{equation}
with a constant $\Gamma>0$ that depends neither on $m$ nor $\varphi$.
\end{theorem}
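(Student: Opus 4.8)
The plan is to exhibit $\Psi_m$ explicitly as a real superposition of the vacuum with a two-particle vector and to optimise the mixing. Concretely, take $\Psi_m=(\Omega+\lambda\,\chi)/\sqrt{1+\lambda^2}$ with $\lambda\in\RR$, where $\chi=\tfrac{1}{\sqrt2}\,a^*(h)^2\Omega$ is the normalised two-particle state built from a real, smooth, nonnegative one-particle wavefunction $h\in\HH=L^2(\RR^3,d^3\kb/(2\pi)^3)$ with $\|h\|=1$, supported in a thin shell just inside the mass shell $\|\kb\|=m$. First I would expand the Wick-ordered density $\rho_m(t,\Ob)=\tfrac12\bigl({:}\dot\phi^2{:}+{:}(\nabla\phi)^2{:}+m^2{:}\phi^2{:}\bigr)$ into its creation--creation, annihilation--annihilation and number parts; smearing against $f$ turns the time factors $e^{\pm i(\omega_1\pm\omega_2)t}$ into $\hat f(\omega_1\pm\omega_2)$ (using that $f$ is even, so $\hat f$ is even), where $\omega_i=\omega(\kb_i)=(\|\kb_i\|^2+m^2)^{1/2}$. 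Because $\rho_m$ is Wick-ordered, its vacuum expectation vanishes, so on $\mathrm{span}\{\Omega,\chi\}$ the smeared density is represented by the real symmetric matrix $\left(\begin{smallmatrix}0&X\\X&D\end{smallmatrix}\right)$, and choosing $\Psi_m$ to be its ground eigenvector gives the value
\[
\frac{D-\sqrt{D^2+4X^2}}{2}=\frac{-2X^2}{D+\sqrt{D^2+4X^2}}\le-\frac{X^2}{D+|X|},
\]
which is the quantity to be bounded from above.

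Next I would compute the two entries from the commutation relations. A short calculation gives
\[
X=\sqrt2\int\!\!\int\frac{-\omega_1\omega_2-\kb_1\cdot\kb_2+m^2}{4\sqrt{\omega_1\omega_2}}\,\hat f(\omega_1+\omega_2)\,h(\kb_1)h(\kb_2),
\]
and an analogous expression for $D$ with the positive kernel $(\omega_1\omega_2+\kb_1\cdot\kb_2+m^2)/(2\sqrt{\omega_1\omega_2})$ and argument $\hat f(\omega_1-\omega_2)$, the integrals being taken in $\HH$. The decisive feature is the resonance on the shell: when $\|\kb_i\|=m$ one has $\omega_1+\omega_2=2\sqrt2\,m$ (which is exactly why $\hat f$ is probed at $2\sqrt2\,m$), while the cross-kernel numerator equals $-m^2-\kb_1\cdot\kb_2\le0$, vanishing only for antipodal momenta. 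Hence, with $h\ge0$, the entry $X$ is negative, whereas $D>0$.

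I would then estimate the two entries and extract the $m$-dependence by the substitution $\kb_i=m\pb_i$ with $h(\kb)=m^{-3/2}H(\kb/m)$, which preserves $\|h\|=\|H\|=1$; both kernels scale as $m$ and the two momentum measures as $m^6$, so $X=m^4\widehat X[H]$ and $D=m^4\widehat D[H]$ with $m$-independent functionals of the fixed profile $H$. For $|X|$, supporting $h$ strictly inside the shell forces $\omega_1+\omega_2\le2\sqrt2\,m$, and since $\omega_1+\omega_2\ge2m>m_0$, the monotonicity of $\varphi$ together with \eqref{eq:fhat} yields the pointwise bound $\hat f(\omega_1+\omega_2)\ge\varphi(\omega_1+\omega_2)\ge\varphi(2\sqrt2\,m)$, which pulls out of the integral to give $|X|\ge c_1\,m^4\,\varphi(2\sqrt2\,m)$ with $c_1>0$ depending only on $H$. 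For $D$, the elementary bound $|\hat f(u)|\le\hat f(0)=\int f$ (valid because $f\ge0$) and $H\ge0$ give $0<D\le c_2\,m^4\int f$. Since $\varphi(2\sqrt2\,m)\le\hat f(2\sqrt2\,m)\le\int f$, we also have $|X|\lesssim D$, so $D+|X|\le2D$, and combining with the eigenvalue bound yields
\[
\int\ip{\Psi_m}{\rho_m(t,\Ob)\Psi_m}\,f(t)\,dt\le-\frac{X^2}{2D}\le-\Gamma\,m^4\,\varphi(2\sqrt2\,m)^2,
\]
where $\Gamma=c_1^2/(2c_2\int f)$ depends on $f$ only through $\int f$, and in particular on neither $m$ nor $\varphi$. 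Finally, $\Psi_m$ is a finite-particle vector with smooth, compactly supported momentum wavefunction and is therefore a Hadamard state, as required.

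I expect the main obstacle to be the momentum-space estimate isolating $\varphi(2\sqrt2\,m)$, where two effects compete: concentrating $h$ on the shell sharpens the evaluation of $\hat f$ at $2\sqrt2\,m$ but shrinks $\widehat X[H]$ and $\widehat D[H]$, so one must commit to a fixed, non-degenerate profile $H$ and verify that the resulting functionals are strictly positive and genuinely independent of $m$. By comparison, the combinatorial bookkeeping from the commutation relations and the compatibility of the support condition $\|\kb\|\le m$ with $h$ being smooth, nonnegative and normalised are routine.
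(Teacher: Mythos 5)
Your proposal follows essentially the same route as the paper's proof: a vacuum-plus-two-particle trial vector whose two-particle wavefunction is supported at momenta of order $m$ inside the ball $\|\kb\|\le m$, so that the creation--creation term probes $\hat f$ at $\omega_1+\omega_2\le 2\sqrt{2}\,m$ where \eqref{eq:fhat} and monotonicity give $\hat f\ge\varphi(2\sqrt2\,m)$, while the number term is controlled by $\hat f\le\hat f(0)$; negativity then comes from the cross term being linear and the diagonal term quadratic in the mixing. The differences are organisational: the paper uses a general symmetric profile $B(\kb/m,\kb'/m)$ with an angular restriction $|\theta|<\pi/3$ (making the cross kernel pointwise bounded below by a multiple of $m$ on its support) and inserts the factor $\varphi(2\sqrt2\,m)$ into the two-particle amplitude so that the optimisation over the mixing parameter $\lambda$ is manifestly $m$- and $\varphi$-independent, whereas you take a product wavefunction $h\otimes h$ and minimise the exact $2\times2$ compression; these are equivalent ways of optimising the same superposition.

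Two steps in your write-up need repair, though neither is fatal. First, your sign claim for $X$ is justified only on the mass shell, while $h$ lives strictly inside it; off shell you need the elementary inequality $\omega_1\omega_2\ge\|\kb_1\|\,\|\kb_2\|+m^2$ (Cauchy--Schwarz in $\RR^2$ applied to $(\|\kb_1\|,m)$ and $(\|\kb_2\|,m)$), which gives $\omega_1\omega_2+\kb_1\cdot\kb_2-m^2\ge\|\kb_1\|\,\|\kb_2\|(1+\cos\theta)\ge 0$ pointwise; this yields $X\le 0$ and, for any fixed profile $H\ge 0$ not concentrated on antipodal pairs, the lower bound $|X|\ge c_1 m^4\varphi(2\sqrt2\,m)$ (alternatively, restrict the angular support of $h$, which is the analogue of the paper's condition on $B$). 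Second, the step ``$|X|\lesssim D$, so $D+|X|\le 2D$'' is a non sequitur: the inequality $\varphi(2\sqrt2\,m)\le\int f$ says nothing about $D$, and no useful \emph{lower} bound on $D$ is available, since $\hat f(\omega_1-\omega_2)$ is only known to be nonnegative --- the hypothesis \eqref{eq:fhat} constrains $\hat f$ only on $[m_0,\infty)$, $\hat f$ is not assumed monotone, and $\omega_1-\omega_2$ ranges over an interval whose length grows with $m$, so $\hat f(\omega_1-\omega_2)$ may be tiny on most of $\supp(h\otimes h)$. The fix is immediate and uses only estimates you already have: bound $|X|\le c_3 m^4\int f$ exactly as you bounded $D$ (using $0\le\hat f(u)\le\hat f(0)=\int f$), so that $D+|X|\le (c_2+c_3)m^4\int f$ and hence
\begin{equation*}
-\frac{X^2}{D+|X|}\;\le\;-\frac{c_1^2\,m^4\,\varphi(2\sqrt2\,m)^2}{(c_2+c_3)\int f}\,,
\end{equation*}
which is the desired bound with $\Gamma$ depending only on $f$ and the fixed profile $H$, and neither on $m$ nor on $\varphi$.
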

  
Returning to our model of countably many fields, we deduce
that 
\begin{equation}
\inf_{\substack{\Psi\in\St\\ \|\Psi\|=1}} \int \ip{\Psi}{\rho(t,\Ob)\Psi} f(t)\,dt 
\le -\Gamma \sum_{r} m_r^4 \varphi(2\sqrt{2} m_r)^2,
\end{equation}  
assuming for simplicity that $m_1>m_0$. 
Noting that the rescaled function $f_\lambda(t)=\lambda^{-1}f(t/\lambda)$ obeys
\eqref{eq:fhat} with $\varphi(|u|)$ replaced by $\varphi(\lambda|u|)$ and $m_0$ by $m_0/\lambda$,
the existence of a QEI for this model then implies the convergence of $\sum_r m_r^4 \varphi(2\sqrt{2}\lambda m_r)^2$
for all $\lambda>0$.
 
This observation produces a link between QEIs, thermal stability
and nuclearity. To do this, we first construct some test functions
obeying the hypotheses of Theorem~\ref{thm:mbound}. Observe that if $\chi\in\CoinX{\RR}$ is even and nonnegative, then the convolution $\eta = \chi\star\chi$ is even, nonnegative and has nonnegative Fourier transform. Choose $\beta_0>0$ and define 
\begin{equation}
f(t) = \frac{\beta_0\eta(t)}{\pi(t^2+\beta_0^2)}\,,
\end{equation}
assuming without loss that $\int f(t)\,dt=1$. Then,
for any $u>0$, 
\begin{equation}
\hat{f}(u) = \int_{-\infty}^{\infty}\frac{du'}{2\pi} \hat{\eta}(u')e^{-\beta_0|u-u'|}
\ge e^{-\beta_0 u}\int_{-\infty}^0 \frac{du'}{2\pi} \hat{\eta}(u')
e^{\beta_0 u'}
\end{equation}
so--as $\hat{f}$ is even--we conclude that 
$\hat{f}(u)\ge \kappa e^{-\beta_0|u|}$ for some positive constant $\kappa$.

\begin{theorem}
Let $f\in\CoinX{\RR}$ be a fixed test function obeying the 
hypotheses of Theorem~\ref{thm:mbound} with 
$\varphi(u)=\kappa e^{-\beta_0 u}$ (for some $\kappa, \beta_0>0$)
and define $f_\lambda(t)=\lambda^{-1}f(t/\lambda)$. Consider
the vacuum representation of the theory of countably many independent scalar fields with 
a mass gap and suppose it satisfies a QEI bound 
\begin{equation}
\inf_{\substack{\Psi\in\St\\ \|\Psi\|=1}} \int \ip{\Psi}{\rho(t,\xb)\Psi} f_\lambda(t) \,dt \ge -Q(\lambda) >-\infty
\end{equation}
for all $\lambda>0$. Then the thermal equilibrium states of the theory are locally normal at all temperatures. Furthermore, if $Q$ obeys a polynomial
scaling bound $Q(\lambda)\le C\lambda^{-n}$ for some $C, n>0$
then the theory fulfils the nuclearity criterion and hence has the split property. 
\end{theorem}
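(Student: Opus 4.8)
The plan is to convert the hypothesised QEI into a growth condition on the mass spectrum and then feed this into the equivalences already assembled above. The engine is the combined upper bound obtained just before the statement: applying Theorem~\ref{thm:mbound} to the $r$th field, tensored with the vacua of all the others so that the witnessing state lies in $\St$, and summing over $r$ gives
\[
\inf_{\substack{\Psi\in\St\\ \|\Psi\|=1}} \int \ip{\Psi}{\rho(t,\Ob)\Psi}\, f_\lambda(t)\,dt \le -\Gamma \sum_r m_r^4\, \varphi(2\sqrt2\,\lambda m_r)^2 .
\]
Inserting the specific profile $\varphi(u)=\kappa e^{-\beta_0 u}$ and pairing this against the hypothesised lower bound $-Q(\lambda)$ yields, after cancelling signs,
\[
\Gamma\kappa^2 \sum_r m_r^4\, e^{-4\sqrt2\,\beta_0\lambda m_r}\le Q(\lambda),
\]
valid for every $\lambda>0$. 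Writing $\gamma=4\sqrt2\,\beta_0$, this single inequality drives both conclusions.

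For local normality I would first note that finiteness of $Q(\lambda)$ for all $\lambda$ forces $\sum_r m_r^4 e^{-\gamma\lambda m_r}$ to converge for every $\lambda>0$. To reach $F(\beta)$, fix $\beta>0$, choose $\lambda$ with $\gamma\lambda<4\beta$, and exploit the mass gap: since $m_r\ge m_1>0$ one has $e^{-4\beta m_r}\le m_1^{-4}\,m_r^4 e^{-\gamma\lambda m_r}$, so $\sum_r e^{-4\beta m_r}<\infty$ and therefore $F(\beta)=\sum_r e^{-4\beta m_r}/m_r^2\le m_1^{-2}\sum_r e^{-4\beta m_r}<\infty$. As $F(\beta)$ is finite for all $\beta>0$, the Buchholz--Junglas criterion recalled above gives local normality of the thermal equilibrium states at every temperature.

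For the second assertion I would feed the polynomial scaling $Q(\lambda)\le C\lambda^{-n}$ into the same inequality to get $\sum_r m_r^4 e^{-\gamma\lambda m_r}\le C'\lambda^{-n}$ for a new constant $C'$. Setting $\gamma\lambda=\beta/4$ and again using $m_r\ge m_1$ to strip the prefactor via $e^{-\beta m_r/4}\le m_1^{-4}\,m_r^4 e^{-\beta m_r/4}$ converts this directly into $G(\beta)=\sum_r e^{-\beta m_r/4}\le \mathrm{const}\cdot\beta^{-n}$. This is exactly the sufficient condition~(b) of the Proposition, so nuclearity holds, and the split property then follows from Theorem~\ref{thm:nuctosplit}, whose hypotheses (isotony, nonnegative Hamiltonian with the vacuum as unique ground state) are met by the free-field net.

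The substantive input is not in any of these manipulations but in Theorem~\ref{thm:mbound}, which I treat as given: it is what supplies, for each individual field, a normalised state witnessing negative smeared energy density of the correct order $m^4\varphi(2\sqrt2\,m)^2$. Everything downstream is bookkeeping on exponents---checking that the exponential profile $\varphi(u)=\kappa e^{-\beta_0 u}$ produces, after the $2\sqrt2\,\lambda$ rescaling and squaring, an exponent linear in $\lambda m_r$ that can be matched against the $\beta m_r$ in $F$ and $G$---together with repeated use of the mass gap $m_1>0$ to trade polynomial factors in $m_r$ for harmless constants. The only genuine (and minor) obstacle I anticipate is verifying that the witnessing states really lie in $\St$, so that the hypothesised QEI, which is quantified over $\St$, actually applies to them; this is secured because each witness is a single-field Fock vector tensored with vacua, hence a finite tensor product of the form defining $\St$.
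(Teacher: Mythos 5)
Your overall route is the same as the paper's: turn the QEI hypothesis into control of $\sum_r m_r^4 e^{-\gamma\lambda m_r}$ with $\gamma=4\sqrt{2}\beta_0$, use the mass gap to convert this into finiteness of $F(\beta)$ for all $\beta>0$ (hence local normality at all temperatures, by the Buchholz--Junglas equivalence recalled in the text), and convert the polynomial scaling of $Q$ into an inverse-polynomial bound on $G(\beta)$, so that nuclearity follows from part (b) of the Proposition and the split property from Theorem~\ref{thm:nuctosplit}. All of that downstream exponent bookkeeping, including the uses of $m_r\ge m_1$ to trade polynomial prefactors for constants, is correct.

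The gap is in the step you call the engine. You take as witnesses the \emph{single-field} states $\Psi^{(r)}=\Psi_{m_r}\otimes\bigotimes_{s\neq r}\Omega_s$ (your closing paragraph makes clear this is what you intend) and then ``sum over $r$''. But each such witness bounds the infimum only by $-\Gamma m_r^4\varphi(2\sqrt{2}\lambda m_r)^2$ individually; different trial states all bound the \emph{same} infimum, so the most you can conclude is
\begin{equation*}
\inf_{\substack{\Psi\in\St\\ \|\Psi\|=1}} \int \ip{\Psi}{\rho(t,\Ob)\Psi}\, f_\lambda(t)\,dt \;\le\; -\Gamma\,\sup_r\, m_r^4\varphi(2\sqrt{2}\lambda m_r)^2 ,
\end{equation*}
a supremum, not a sum --- upper bounds on one infimum obtained from different trial states cannot be added. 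The difference is fatal, not cosmetic: for the mass spectrum \eqref{eq:bad_mn} the individual terms $m_r^4 e^{-\gamma\lambda m_r}$ are uniformly bounded in $r$ for every $\lambda>0$, yet their sum diverges for small $\lambda$, $F(\beta)$ is infinite for small $\beta$, and local normality and nuclearity fail; so no argument proceeding from the sup bound alone can reach the conclusions. What is needed (and what underlies the displayed inequality the paper states just before the theorem) are witnesses exciting finitely many fields simultaneously: $\Psi^{(N)}=\bigotimes_{r\le N}\Psi_{m_r}\otimes\bigotimes_{r>N}\Omega_r$, which lies in $\St$ as a finite tensor product of Hadamard vectors with vacua. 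Since the Wick-ordered energy density is additive over the independent fields and each vacuum factor contributes zero, this state yields the upper bound $-\Gamma\sum_{r\le N} m_r^4\varphi(2\sqrt{2}\lambda m_r)^2$, and letting $N\to\infty$ gives the sum bound that your inequality $\Gamma\kappa^2\sum_r m_r^4 e^{-\gamma\lambda m_r}\le Q(\lambda)$ requires. With this repair, the rest of your argument goes through verbatim.
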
 
\begin{proof}
The existence of a QEI for this model implies the convergence of $\sum_r m_r^4 e^{-4\sqrt{2}\beta m_r}$, and hence $\sum_r   e^{-4\sqrt{2}\beta m_r}$,  for all $\beta>0$; moreover, 
the scaling bound on $Q(\lambda)$ as $\lambda\to 0+$ immediately implies an inverse polynomial scaling bound on the function $G(\beta)$ 
as $\beta\to 0^+$. Thus,  the nuclearity criterion holds and the split
property follows by Theorem~\ref{thm:nuctosplit}. 
\end{proof}

We remark that this result uses only rather minimal information
on the mass spectrum derived from the existence of QEIs for test functions decaying more
slowly than exponentially. One may also construct compactly supported test functions with more finely controlled decay: moderately explicit examples of smooth, even, everywhere positive functions of compact support $f$ with transforms obeying
\begin{equation}
\hat{f}(u) =   \kappa e^{-\gamma|u|^\alpha} +
O\left(\frac{e^{-\gamma|u|^\alpha}}{|u|^{1-\alpha}}\right) ,
\end{equation}
on $u\neq 0$, where $\kappa, \gamma>0$ and $0<\alpha<1$ are given in \cite[\S II]{FewsFord:2015}; an older but less explicit construction of such functions appears in~\cite{Ingham:1934}. For the theory to obey a finite QEI for all such test functions, it is necessary that $\sum_r m_r^4 e^{-(\beta m_r)^\alpha}<\infty$ for all $\beta>0$ and $0<\alpha<1$.

Connections between QEIs and thermal stability have been studied before. Indeed, arguments based on the second law of thermodynamics underlay 
Ford's original intuition that QFT might obey QEI-type
bounds~\cite{Ford78}. In an abstract setting~\cite{FewVer-Passivity},  it was shown that the existence of QEIs entailed the existence of passive states, thereby proving that
the second law of thermodynamics holds. 
The fact that QEIs with polynomial scaling implies nuclearity was stated without proof in~\cite{Few:Bros}. By way of a converse, we give the following:
\begin{theorem}
If the theory of countably many scalar fields with mass gap satisfies
the nuclearity criterion \eqref{eq:nuclearity_criterion} then one has a QEI bound of the form \eqref{eq:towerQEI} for all $g\in\CoinX{\RR}$ such that  $\hat{g}(u) = O(e^{-\gamma |u|^\alpha})$ for any $\alpha>n/(n+1)$ and $\gamma>0$ (or with $\alpha=n/(n+1)$ and sufficiently large $\gamma$).
\end{theorem}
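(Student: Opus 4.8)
The plan is to recognise that the QEI \eqref{eq:towerQEI} has already been obtained for every test function $g$, simply by summing the single-field bounds over the mass tower; the sole content of the present statement is therefore that its right-hand side
\[
\int_0^\infty u^d N(u)\,|\hat{g}(u)|^2\,du
\]
is finite (with $d=4$) for the stated class of $g$. I would thus reduce the entire proof to a convergence estimate for this one integral, with $N$ as in \eqref{eq:Nu}.

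First I would recall the growth bound that nuclearity supplies. The criterion \eqref{eq:nuclearity_criterion} was shown above to force the sub-exponential estimate \eqref{eq:Nbound},
\[
N(v) \le B\, v^{2/(n+1)}\, e^{C v^{n/(n+1)}},
\]
for positive constants $B,C$. Since the theory has a mass gap, $N(u)$ vanishes for $u<m_1$, so the integral effectively runs over $[m_1,\infty)$ and there is no difficulty at the lower endpoint; convergence is decided entirely by the behaviour as $u\to\infty$.

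Next I would insert \eqref{eq:Nbound} together with the decay hypothesis $\hat{g}(u)=O(e^{-\gamma|u|^\alpha})$ into the integrand. With $d=4$ it is then dominated, up to an overall constant, by
\[
u^{\,4+2/(n+1)}\,\exp\!\Big(C u^{n/(n+1)} - 2\gamma\, u^{\alpha}\Big).
\]
The polynomial prefactor is harmless, so everything turns on the dominant exponential. If $\alpha>n/(n+1)$ then $-2\gamma u^{\alpha}$ overwhelms $C u^{n/(n+1)}$ for every $\gamma>0$, the exponent tends to $-\infty$, and the integrand is rapidly decaying; this is the first case of the theorem. In the borderline case $\alpha=n/(n+1)$ the two powers coincide and the exponent equals $(C-2\gamma)\,u^{n/(n+1)}$, which tends to $-\infty$ exactly when $2\gamma>C$ --- the precise meaning of `sufficiently large $\gamma$'. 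Either way the integral converges, the right-hand side of \eqref{eq:towerQEI} is finite, and the claimed QEI holds on $\St$.

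I do not expect a genuine obstacle here: the argument is a Laplace-type comparison of the growth rate of $N$ against the decay rate of $\hat{g}$, and the only care required is in matching the exponents $n/(n+1)$ and $\alpha$ and in keeping track of the constant $C$ at the threshold. The conceptual point worth emphasising is that the critical decay exponent $n/(n+1)$ imposed on $\hat{g}$ is precisely the sub-exponential growth exponent that nuclearity forces on $N$ via \eqref{eq:Nbound}; the theorem is in this sense a clean dual of the estimate leading to that bound.
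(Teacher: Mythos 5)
Your proposal is correct and follows essentially the same route as the paper: the paper's proof likewise consists of substituting the nuclearity-derived bound \eqref{eq:Nbound} into the right-hand side of \eqref{eq:towerQEI} and checking convergence for the stated decay classes of $\hat{g}$. Your write-up merely makes explicit the Laplace-type exponent comparison (including the threshold $2\gamma>C$ at $\alpha=n/(n+1)$) that the paper leaves implicit.
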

\begin{proof}
We have already shown that $N(v) \le B v^{2/(n+1)} e^{C v^{n/(n+1)}}$
for positive constants $B$ and $C$. 
Substituting into \eqref{eq:towerQEI}, the
QEI bound is finite for all $g\in\CoinX{\RR}$ meeting the  hypotheses. 
\end{proof}

Note that the conditions on $\hat{g}$ become more stringent as 
$n$ increases (as one would expect), but nonetheless, the result is
nonempty for any $n>0$. As the upper bound \eqref{eq:Nbound} on $N(u)$ arising from nuclearity is rather weak, it seems reasonable
to conjecture that nuclearity for such models implies the existence of a QEI for a wide class of test functions, with scaling behaviour not 
much worse than polynomial.

Do all theories obeying nuclearity satisfy QEIs (or, even more, QEIs with good scaling behaviour)? One problem with this is that there are arguments suggesting that QEIs do not hold for general interacting theories in 
the form we have stated~\cite{OlumGraham03}. Allowing for a lower bound that depends on the overall energy scale of the state (but less strongly than any possible \emph{upper} bound) 
one can establish results analogous to QEIs for `classically positive' observables
in a model independent setting~\cite{BostelmannFewster09}. One drawback to those
results, however, is that it has not yet been possible to show that
the energy density is classically positive in general theories.  
 
Summarising this section: we have explained how nuclearity criteria
are related both to good thermodynamic behaviour
and also to the QEIs. The connection with QEIs is currently restricted
to models consisting of countably many independent free
fields; however the circumstantial evidence for a more general
result seems strong.

\section{The split property in curved spacetime}
\label{sec:cst}

The previous sections have described the split property, 
and the related nuclearity criterion, in flat spacetime. 
Here, we turn to the question of what can be said in
curved spacetimes. While the
split property can be stated in much the same way in curved spacetimes as in flat,
the same is not true of nuclearity because general curved spacetimes do not admit time-translation
symmetries and therefore possess no Hamiltonian.  

In the first instance, then, we restrict to ultrastatic spacetimes. 
Let $(\Sigma,h)$ be a complete Riemannian metric space, 
assumed connected. Then the manifold $\RR\times\Sigma$
equipped with Lorentz-signature metric
\begin{equation}\label{eq:ultra}
g = dt\otimes dt - \pr_\Sigma^* h
\end{equation}
and time-orientation chosen so that $t$ increases to the future, 
is by definition an ultrastatic spacetime. Moreover, the spacetime is 
globally hyperbolic~\cite[Prop. 5.2]{Kay1978}: it admits no
closed timelike curves, and every set of the form $J^+(p)\cap J^-(q)$ is compact, where $J^{+/-}(p)$ is the set of points that can be
reached by future/past-directed smooth causal curves starting 
from $p$. 

Let us assume that a net of local von Neumann algebras, acting on a Hilbert space $\HH$, has been associated to this spacetime, and that the geometrical time-translation symmetry $(t,\sigma)\mapsto (t+\tau,\sigma)$ is implemented by a unitary group $U(\tau)=e^{iH\tau}$, where the generator $H$ is a positive unbounded operator with a one-dimensional kernel spanned by a vector $\Omega$. In this setting, one can formulate nuclearity criteria just as in Minkowski space, and 
derive the split property as a consequence by \cite[Prop.~17.1.4]{BaumWollen:1992} (which is an abstract version of~\cite{BucDAnFre:1987}). This strategy was adopted by 
Verch~\cite{Verch_nucspldua:1993} in the case of the free
Klein--Gordon field on ultrastatic spacetimes, and somewhat later by D'Antoni and Hollands
for Dirac fields~\cite{DAnHol:2006}.

As already mentioned, one cannot pursue this strategy in general spacetimes. Instead one proceeds
indirectly by deforming spacetimes in which the split property is known to hold into others, in such a way that the split property is preserved. This approach
was used by Verch in~\cite{Verch_nucspldua:1993} for the specific
case of the Klein--Gordon field, but has even older
antecedents in other contexts~\cite{FullingNarcowichWald}.

We will present a modern, general and streamlined version of this argument, 
based on \cite{Few_split:2015}, to which the reader is referred for the full
details. The general framework used is that
of locally covariant quantum field theory~\cite{BrFrVe03} (see 
\cite{FewVerch_aqftincst:2015} for a recent expository review).  
The geometric aspects are as follows. Fixing a spacetime 
dimension $d\ge 2$, we define a category of spacetimes $\Loc$, 
whose objects are globally hyperbolic spacetimes of dimension $d$, equipped with 
orientation and time-orientation and typically denoted $\Mb,\Nb$ etc. 
The morphisms in this category
are smooth isometric embeddings, preserving orientation and time-orientation
and having causally convex images -- that is, every causal curve joining
points in the image lies entirely within it. 
If the image of $\psi:\Mb\to\Nb$  contains a Cauchy surface
of $\Nb$, $\psi$ is called a \emph{Cauchy  morphism}. 

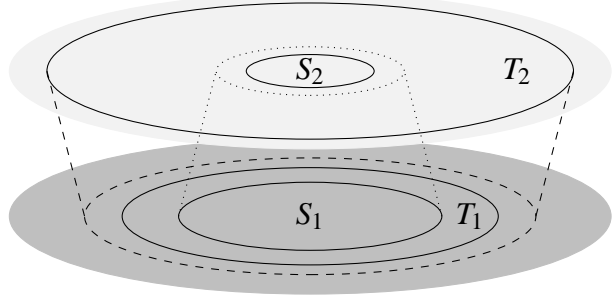
\begin{figure}
\tdplotsetmaincoords{75}{90}
\pgfmathsetmacro{\rvec}{.8}
\pgfmathsetmacro{\thetavec}{15}
\pgfmathsetmacro{\phivec}{60}
\begin{center}
\begin{tikzpicture}[scale=5,tdplot_main_coords]
\coordinate (O) at (0,0,0);
\coordinate (Q) at (0,0,0.4);
\tdplotdrawarc[color=lightgray,fill=lightgray]{(O)}{0.8}{0}{360}{anchor=north}{}
\tdplotdrawarc[color=vlg,fill=vlg]{(Q)}{0.8}{0}{360}{anchor=north}{}
\tdplotdrawarc{(O)}{0.35}{0}{360}{anchor=north}{}
\tdplotdrawarc{(O)}{0.5}{0}{360}{anchor=north}{}
\tdplotdrawarc[dashed]{(O)}{0.6}{0}{360}{anchor=north}{}
\tdplotdrawarc{(Q)}{0.17}{0}{360}{anchor=north}{}
\tdplotdrawarc[dotted]{(Q)}{0.25}{0}{360}{anchor=north}{}
\tdplotdrawarc{(Q)}{0.7}{0}{360}{anchor=north}{}
\draw[dotted] (0,0.35,0) -- (0,0.25,0.4);
\draw[dotted] (0,-0.35,0) -- (0,-0.25,0.4);
\draw[dashed] (0,0.6,0) -- (0,0.7,0.4);
\draw[dashed] (0,-0.6,0) -- (0,-0.7,0.4);
\node at (0,0.0,0.0) {$S_1$};
\node at (0,0.425,0) {$T_1$};
\node at (0,0.55,0.4) {$T_2$};
\node at (0,0.0,0.4) {$S_2$};
\end{tikzpicture}\caption{Regular Cauchy pairs with $(S_1,T_1)\prec_\Mb (S_2,T_2)$.}
\label{fig:preorder}
\end{center}
\end{figure} 
 
We also introduce the following terminology:
\begin{definition}  
A \emph{regular Cauchy pair} $(S,T)$ in $\Mb\in\Loc$
is an ordered pair of nonempty, open, relatively compact subsets of a common smooth spacelike Cauchy surface in which $\overline{T}$ has nonempty complement, and so that $\overline{S}\subset T$.
There is a preorder on regular Cauchy pairs in $\Mb$ defined so that $(S_1,T_1)\prec_\Mb (S_2,T_2)$
if and only if $S_2\subset D_\Mb(S_1)$ and $T_1\subset D_\Mb(T_2)$. 
\end{definition}
Here, $D_\Mb(S)$ is the \emph{Cauchy development} of $S$ in $\Mb$ -- all points $p$ in $\Mb$ such that every inextendible piecewise-smooth causal curve through $p$ intersects $S$. The preordering is illustrated in Fig.~\ref{fig:preorder}.  

Regular Cauchy pairs have important stability properties~\cite[Lem.~2.4]{Few_split:2015}: if $\psi:\Mb\to\Nb$ is Cauchy, then $(S,T)$ is a 
regular Cauchy pair in $\Mb$ if and only if $(\psi(S),\psi(T))$ is a regular Cauchy pair in $\Nb$  and $\overline{\psi(T)}\subset\psi(\Mb)$. Furthermore, in all `sufficiently nearby' Cauchy surfaces to one containing $(S,T)$,
there are regular Cauchy pairs preceding and preceded by $(S,T)$ in the preorder. (See~\cite[Lem.~2.6]{Few_split:2015} for the precise details and more general statements.) 

Locally covariant QFT provides an axiomatic framework for quantum field theory on curved spacetimes
and has led to a number of interesting developments, as described in~\cite{FewVerch_aqftincst:2015}.
The first assumption is: 
\begin{description}
\item[\em Functoriality] A theory is a covariant functor $\Af:\Loc\to \CAlg$ from 
$\Loc$ to the category $\CAlg$ of unital $C^*$-algebras and injective unit-preserving $*$-homomorphisms.
\end{description}
Thus, to each spacetime $\Mb$ there is an object $\Af(\Mb)$ of $\CAlg$ and 
to every morphism between spacetimes $\psi:\Mb\to\Nb$ there is a $\CAlg$-morphism
$\Af(\psi):\Af(\Mb)\to \Af(\Nb)$ such that $\Af(\id_\Mb) = \id_{\Af(\Mb)}$ and
$\Af(\phi\circ\psi) = \Af(\phi)\circ \Af(\psi)$. Given this, we may define
a  \emph{kinematic net} in each $\Mb\in\Loc$ indexed by nonempty open causally 
convex subsets, 
\begin{equation}
O \mapsto \Af^\kin(\Mb;O) := \Af(\iota_{O})(\Af(\Mb|_O)).
\end{equation}
Here $\Mb|_O$ is the region $O$, equipped with the causal structures inherited from $\Mb$, 
and regarded as a spacetime (i.e., an object of $\Loc$) in its own right, and $\iota_O$ is the
inclusion map, which becomes a morphism $\iota_O:\Mb|_O\to\Mb$ (see Fig.~\ref{fig:kinematic}).
The kinematic net is automatically isotonous, owing to the functorial nature of $\Af$: if
$O_1\subset O_2$ then $\iota_{O_1}$ factors through $\iota_{O_2}$ and hence $\Af^\kin(\Mb;O_1)\subset \Af^\kin(\Mb;O_2)$.
The other assumptions we make can now be stated. 
\begin{description}
\item[\em Einstein Causality] If $O_i\subset \Mb$ are spacelike
separated then
\begin{equation}
[\Af^\kin(\Mb;O_1),\Af^\kin(\Mb;O_2)] = 0.
\end{equation}
\item[\em Timeslice] $\Af$ maps Cauchy morphisms to $\CAlg$-isomorphisms.
\end{description}
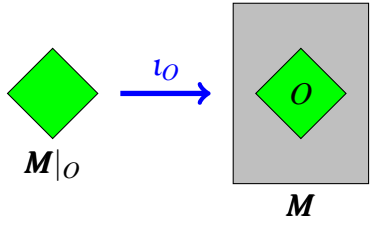
\begin{figure}[t]
\begin{center}
\begin{tikzpicture}[scale=0.6]
\draw[fill=lightgray] (-7,0) ++(-1.5,0) -- ++(3,0) -- ++(0,4) --
++(-3,0) -- cycle;
\draw[fill=Green] (-7,2) +(-1,0) -- +(0,-1) -- +(1,0) -- +(0,1) -- cycle;
\node[anchor=north] at (-7,0) {$\Mb$};
\draw[fill=Green] (-12.5,2) +(-1,0) -- +(0,-1) -- +(1,0) -- +(0,1) -- cycle;
\node at (-7,2) {$O$};
\node[anchor=north] at (-12.5,1) {$\Mb|_O$};
\draw[color=blue,line width=2pt,->] (-11,2) -- (-9,2) node[pos=0.5,above] {$\iota_O$};
\end{tikzpicture}
\end{center}\caption{Illustration of the objects involved in defining the kinematic net}
\label{fig:kinematic}
\end{figure}

\paragraph{The split property}
The algebras produced by $\Af$ are abstract $C^*$-algebras. However, any state $\omega$ on $\Af(\Mb)$ induces a corresponding GNS representation $(\HH,\pi,\Omega)$ such that
$\omega(A)=\ip{\Omega}{\pi(A)\Omega}$ for all $A\in\Af(\Mb)$. Passing to the representation
and taking weak closures, we obtain a net of von Neumann algebras 
$O\mapsto\pi(\Af^\kin(\Mb;O))''$ associated to $\Af$, $\Mb$ and $\omega$. This is the setting in which the split property may be defined.
\begin{definition} 
Let $\Af:\Loc\to\CAlg$ be a locally covariant QFT and $\Mb\in\Loc$. 
A state $\omega$ on $\Af(\Mb)$ has the \emph{split property} for
a regular Cauchy pair $(S,T)$ if, in its GNS representation ($\HH,\pi,\Omega)$, there is a type-$I$ factor $\Ngth$ such that
\begin{equation}
\Rgth_S \subset \Ngth \subset  \Rgth_T ,
\end{equation}
where $\Rgth_U = \pi(\Af^\kin(\Mb;D_\Mb(U)))''$ for $U=S,T$.
\end{definition}
Apart from the use of regions based on regular Cauchy pairs, this is essentially the same formulation
as used in Minkowski space. The present approach allows us to establish a key inheritance property \cite[Remark 3.2]{Few_split:2015}. Suppose that 
$\omega$ is split for $(S,T)$. Then if $(\tilde{S},\tilde{T})$ is some other Cauchy pair in $\Mb$ with $(S,T)\prec_\Mb (\tilde{S},\tilde{T})$, we have $\tilde{S}\subset D_{\Mb}(S)$, $T\subset D_{\Mb}(\tilde{T})$, so also
\begin{equation}
D_\Mb(\tilde{S})\subset D_\Mb(S), \qquad 
D_\Mb(\tilde{T})\subset D_\Mb(T).
\end{equation}
Hence by isotony, $\Rgth_{\tilde{S}}  \subset  \Rgth_S \subset \Ngth \subset  \Rgth_T  \subset 
 \Rgth_{\tilde{T}}$, so $\omega$ is split for $(\tilde{S},\tilde{T})$.
Moreover, the split property is also stable under Cauchy morphisms:
\begin{lemma}
Suppose $\psi:\Mb\to\Nb$ is a Cauchy morphism and let 
$\omega_\Mb$, $\omega_\Nb$ be a states on $\Af(\Mb)$, $\Af(\Nb)$. 
Then $\omega_\Nb$ is split for $(\psi(S),\psi(T))$ if and only if $\Af(\psi)^*\omega_\Nb$ is
split for $(S,T)$.  Consequently, as $\Af(\psi)$ is an isomorphism, $\omega_\Mb$ is 
split for $(S,T)$ if and only if $(\Af(\psi)^{-1})^*\omega_\Mb$
is split for $(\psi(S),\psi(T))$.
\end{lemma}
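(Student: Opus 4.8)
The plan is to realise both split conditions inside a single Hilbert space, so that they reduce to literally the \emph{same} condition on a pair of von Neumann algebras. Write $(\HH_\Nb,\pi_\Nb,\Omega_\Nb)$ for the GNS triple of $\omega_\Nb$. Since $\psi$ is Cauchy, the Timeslice axiom makes $\Af(\psi)$ a unital $*$-isomorphism, and because
$(\Af(\psi)^*\omega_\Nb)(A)=\ip{\Omega_\Nb}{\pi_\Nb(\Af(\psi)A)\Omega_\Nb}$
with $\pi_\Nb(\Af(\psi)\Af(\Mb))\Omega_\Nb=\pi_\Nb(\Af(\Nb))\Omega_\Nb$ dense, the triple $(\HH_\Nb,\pi_\Nb\circ\Af(\psi),\Omega_\Nb)$ is itself a GNS representation of $\Af(\psi)^*\omega_\Nb$. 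As the split property is invariant under unitary equivalence of GNS data (conjugating an intermediate type-$I$ factor by a unitary again yields one and preserves the inclusions), I may compute the split condition for $\Af(\psi)^*\omega_\Nb$ using this particular representative, which lives on the same $\HH_\Nb$ with the same vector.

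The key step is then the algebraic identity
\begin{equation*}
\Af(\psi)\big(\Af^\kin(\Mb;D_\Mb(U))\big)=\Af^\kin(\Nb;D_\Nb(\psi(U))),\qquad U=S,T.
\end{equation*}
To prove it I would set $O=D_\Mb(U)$ and note that $\Mb|_O\in\Loc$, a Cauchy development of a subset of a Cauchy surface being globally hyperbolic with that subset as Cauchy surface. Writing $\psi_O:\Mb|_O\to\Nb|_{\psi(O)}$ for the corestriction of $\psi$, one has the commuting square $\psi\circ\iota_O=\iota_{\psi(O)}\circ\psi_O$ of $\Loc$-morphisms, so functoriality gives $\Af(\psi)\circ\Af(\iota_O)=\Af(\iota_{\psi(O)})\circ\Af(\psi_O)$. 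Since $\psi$ is an isometric embedding, $\psi_O$ is an isomorphism of $\Loc$, hence $\Af(\psi_O)$ is onto $\Af(\Nb|_{\psi(O)})$; applying both composites to $\Af(\Mb|_O)$ and using $\Af^\kin(\Mb;O)=\Af(\iota_O)(\Af(\Mb|_O))$ yields $\Af(\psi)(\Af^\kin(\Mb;O))=\Af^\kin(\Nb;\psi(O))$. The identity follows once $\psi(D_\Mb(U))=D_\Nb(\psi(U))$.

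Granting the identity, applying $\pi_\Nb$ and taking double commutants in $\BB(\HH_\Nb)$ gives, for $U=S,T$,
\begin{equation*}
(\pi_\Nb\circ\Af(\psi))(\Af^\kin(\Mb;D_\Mb(U)))''=\pi_\Nb(\Af^\kin(\Nb;D_\Nb(\psi(U))))'',
\end{equation*}
so that the algebras $\Rgth_S,\Rgth_T$ computed for $\Af(\psi)^*\omega_\Nb$ coincide, as concrete subalgebras of $\BB(\HH_\Nb)$, with $\Rgth_{\psi(S)},\Rgth_{\psi(T)}$ computed for $\omega_\Nb$. Hence an intermediate type-$I$ factor $\Rgth_S\subset\Ngth\subset\Rgth_T$ exists if and only if one exists for $\Rgth_{\psi(S)}\subset\Ngth\subset\Rgth_{\psi(T)}$, which is the stated equivalence. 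The \emph{consequently} clause then follows by taking $\omega_\Nb=(\Af(\psi)^{-1})^*\omega_\Mb$, for which $\Af(\psi)^*\omega_\Nb=\omega_\Mb$.

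The main obstacle is the geometric identity $\psi(D_\Mb(U))=D_\Nb(\psi(U))$ and the verification that $\psi_O$ is a genuine $\Loc$-isomorphism. This rests on the causality theory of Cauchy morphisms: because $\psi$ is Cauchy, $\psi(\Sigma)$ is a Cauchy surface of $\Nb$ for the Cauchy surface $\Sigma$ of $\Mb$ containing $U$, and the causal convexity of $\psi(\Mb)$ allows inextendible causal curves to be transported between $\Mb$ and $\psi(\Mb)$, so that Cauchy developments are respected. This is precisely the content packaged in the stability properties of regular Cauchy pairs cited in \cite[Lem.~2.4]{Few_split:2015}, which in particular guarantee that $\psi(S),\psi(T)$ lie in a common Cauchy surface of $\Nb$, so that the developments appearing on the right-hand sides are well-defined.
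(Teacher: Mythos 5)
Your overall architecture is sound and is essentially the paper's own (very compressed) argument: the paper simply notes that $\omega_\Nb$ and $\Af(\psi)^*\omega_\Nb$ have unitarily equivalent GNS representations and that split inclusions with type-I factors are stable under such equivalence. Your realisation of both states on the single space $\HH_\Nb$ via the representation $\pi_\Nb\circ\Af(\psi)$, and the reduction of the whole lemma to the algebraic identity $\Af(\psi)\bigl(\Af^\kin(\Mb;D_\Mb(U))\bigr)=\Af^\kin(\Nb;D_\Nb(\psi(U)))$, is exactly the content that the paper's one-line proof leaves implicit, and the first half of your verification of that identity (functoriality applied to the commuting square, and the fact that the corestriction $\psi_O:\Mb|_O\to\Nb|_{\psi(O)}$ is a $\Loc$-isomorphism) is correct: it yields $\Af(\psi)\bigl(\Af^\kin(\Mb;O)\bigr)=\Af^\kin(\Nb;\psi(O))$ for $O=D_\Mb(U)$.

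The gap is the step you defer to causality theory: the geometric identity $\psi(D_\Mb(U))=D_\Nb(\psi(U))$ is \emph{false} in general; only $\psi(D_\Mb(U))\subset D_\Nb(\psi(U))$ holds. Take $\Nb$ to be Minkowski space, $\Mb$ the slab $\{(t,\xb):|t|<1\}$ with $\psi$ the inclusion (a Cauchy morphism, since the slab is open, causally convex and contains the Cauchy surface $t=0$), and $U$ the open ball of radius $10$ in that surface. Then $D_\Nb(\psi(U))$ is the open double cone $\{|t|+\|\xb\|<10\}$, which contains points such as $(2,\Ob)$ lying outside $\psi(\Mb)$, whereas $\psi(D_\Mb(U))\subset\psi(\Mb)$. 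So Cauchy developments are respected only in one direction, and $\Af^\kin(\Nb;\psi(O))=\Af^\kin(\Nb;D_\Nb(\psi(U)))$ cannot follow from set equality of the regions. The repair is algebraic, not geometric: invoke the timeslice axiom a second time. Since $\psi(U)$ is a Cauchy surface of the spacetime $\Nb|_{D_\Nb(\psi(U))}$ (it is acausal, being part of the Cauchy surface $\psi(\Sigma)$, and its Cauchy development is the region in question), the corestriction of $\psi$ to a morphism $\Mb|_{D_\Mb(U)}\to\Nb|_{D_\Nb(\psi(U))}$ has image containing $\psi(U)$ and is therefore Cauchy; by the timeslice property $\Af$ of this corestriction is an isomorphism, onto $\Af(\Nb|_{D_\Nb(\psi(U))})$, and composing with $\Af(\iota_{D_\Nb(\psi(U))})$ gives
\begin{equation*}
\Af(\psi)\bigl(\Af^\kin(\Mb;D_\Mb(U))\bigr)=\Af^\kin(\Nb;D_\Nb(\psi(U)))
\end{equation*}
even though the underlying regions differ. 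With this substitution (use the corestriction onto $\Nb|_{D_\Nb(\psi(U))}$, which is Cauchy, in place of the corestriction onto $\Nb|_{\psi(O)}$, which is an isomorphism but lands in the wrong region) the rest of your argument, including the \emph{consequently} clause, goes through verbatim; this is also the route taken in the detailed treatment of \cite{Few_split:2015}.
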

\begin{proof} $\omega_\Nb$ and  $\Af(\psi)^*\omega_\Nb$ have unitarily equivalent GNS representations and the split inclusion is preserved because the type-I property and factor properties are stable. 
\end{proof} 

The main result is:
\begin{theorem}[Rigidity of the split property~{\cite[Thm 3.4]{Few_split:2015}}]
\label{thm:rigid_split}
Suppose that $\Af$ is a locally covariant QFT, $\Mb,\Nb\in\Loc$ have oriented-diffeomorphic Cauchy surfaces, and  $\omega_\Nb$ is a state on $\Af(\Nb)$ that is split for all regular Cauchy pairs in $\Nb$.  
 
Given any regular Cauchy pair $(S_\Mb,T_\Mb)$ in $\Mb$, there is a chain of Cauchy morphisms
between $\Mb$ and $\Nb$ inducing an isomorphism $\nu:\Af(\Mb)\to\Af(\Nb)$ 
such that $\nu^*\omega_\Nb$ has the split property for $(S_\Mb,T_\Mb)$.  
\end{theorem}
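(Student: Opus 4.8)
The plan is to realise the passage between $\Nb$ and $\Mb$ by a Fulling--Narcowich--Wald spacetime deformation and to carry the split property back along it using only the inheritance property and the stability Lemma above. First I would fix a smooth spacelike Cauchy surface $\Sigma_\Mb$ of $\Mb$ carrying $(S_\Mb,T_\Mb)$ and a Cauchy surface $\Sigma_\Nb$ of $\Nb$; the hypothesis of oriented-diffeomorphic Cauchy surfaces lets me identify both with a single manifold $\Sigma$ and write each metric in product form on a collar of its slice. I would then build $\Ib\in\Loc$ on $\RR\times\Sigma$ whose metric coincides with that of $\Mb$ to the past of a slice $\Sigma_1$ and with that of $\Nb$ to the future of a later slice $\Sigma_2$, interpolating smoothly between; that such an interpolation can be kept globally hyperbolic is the geometric fact I would borrow. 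Thin collar neighbourhoods $N_\Mb\ni\Sigma_1$ and $N_\Nb\ni\Sigma_2$ then embed as \emph{Cauchy} morphisms both into $\Ib$ and into $\Mb$, $\Nb$ respectively, so that applying $\Af$ and invoking Timeslice turns each arrow of $\Mb\leftarrow N_\Mb\rightarrow\Ib\leftarrow N_\Nb\rightarrow\Nb$ into an isomorphism; their composite is the required $\nu:\Af(\Mb)\to\Af(\Nb)$.

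Next I would push everything onto $\Ib$. Let $\omega_\Ib$ be the state on $\Af(\Ib)$ corresponding to $\omega_\Nb$ along the right-hand half of the chain, and let $(S_1,T_1)$ be the image of $(S_\Mb,T_\Mb)$ on $\Sigma_1$. Two applications of the stability Lemma then reduce the theorem to a statement internal to $\Ib$: on the one hand $\nu^*\omega_\Nb$ is split for $(S_\Mb,T_\Mb)$ if and only if $\omega_\Ib$ is split for $(S_1,T_1)$; on the other, since $\omega_\Nb$ is split for \emph{every} regular Cauchy pair of $\Nb$, the state $\omega_\Ib$ is split for every regular Cauchy pair lying in $N_\Nb$, in particular for every such pair on $\Sigma_2$. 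So it suffices to produce one regular Cauchy pair $(S_2,T_2)$ on $\Sigma_2$ with $(S_2,T_2)\prec_\Ib(S_1,T_1)$: then $\omega_\Ib$ is split for $(S_2,T_2)$, the inheritance property propagates this up the preorder to $(S_1,T_1)$, and the first equivalence finishes the proof.

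The heart of the matter --- and the step I expect to be the main obstacle --- is manufacturing that preceding pair across a deformation region whose causal width need not be small compared with the collar $T_1\setminus\overline{S_1}$. Requiring $(S_2,T_2)\prec_\Ib(S_1,T_1)$ means $S_1\subset D_\Ib(S_2)$ and $T_2\subset D_\Ib(T_1)$, which forces $S_2$ outward and $T_2$ inward as the slices separate, so the collar of the pair is eroded by the causal spreading between $\Sigma_1$ and $\Sigma_2$. I would control this in two complementary ways: by arranging the interpolation to be causally thin, so that the total spreading between the two slices stays below the fixed collar width of $(S_1,T_1)$; and, when a single step is insufficient, by marching from $\Sigma_1$ to $\Sigma_2$ through a finite chain of intermediate Cauchy surfaces each sufficiently near the next, using the nearby-surface existence statement recorded after the definition of the preorder to choose at each step a regular Cauchy pair preceding the previous one, and appealing to transitivity of $\prec_\Ib$. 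The delicate bookkeeping is to verify that every intermediate pair remains relatively compact with $\overline{T}$ retaining nonempty complement, and that the accumulated erosion of the collar over the finitely many steps stays strictly below its initial width; the stability of regular Cauchy pairs under Cauchy morphisms keeps these choices mutually consistent once the geometry is fixed.
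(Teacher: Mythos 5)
Your proposal is correct and is essentially the paper's own argument: identification of both spacetimes with $\RR\times\Sigma$, an interpolating globally hyperbolic metric $\Ib$ agreeing with $\Mb$ in the past and $\Nb$ in the future, a chain of Cauchy morphisms through the common regions whose composite defines $\nu$, and transport of the split property by the stability lemma (timeslice) together with inheritance along a $\prec_\Ib$-chain of regular Cauchy pairs, with the crux being, exactly as in \cite[Thm 3.4]{Few_split:2015}, that the interpolation is \emph{constructed} so that the causal spreading between the relevant slices stays below the collar width of the pair (your ``causally thin'' device). The one caution concerns your fallback ``marching'' argument: subdividing the passage from $\Sigma_1$ to $\Sigma_2$ into finitely many small steps cannot reduce the accumulated erosion of the collar, because by transitivity of causal relations the total spreading between the two slices is independent of the subdivision --- so if a single step fails, every finite chain fails as well, and the required smallness must be engineered into the metric of $\Ib$ itself, which is precisely what your first device (and the paper's construction) does.
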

\begin{proof}[Outline proof]
First, because $\Mb$ and $\Nb$ have Cauchy surfaces
related by an orientation-preserving diffeomorphism, we may assume without loss that
both spacetimes have common underlying manifold $\RR\times\Sigma$, with the tangent
vector to the first factor being future-directed timelike, and so that $\Sigma_\Mb=\{\tau_\Mb\}\times \Sigma$
for some $\tau_\Mb\in\RR$. The next step is to construct a globally hyperbolic metric on $\RR\times\Sigma$ that interpolates between that of $\Mb$, with which it agrees on $(-\infty,\tau_1)\times\Sigma$ and that of $\Nb$, with which it agrees on 
$(\tau_2,\infty)\times\Sigma$, where $\tau_\Mb<\tau_1<\tau_2$. This also has to be arranged
so that for some $\tau_*\in(\tau_1,\tau_2)$ and $\tau_\Nb >\tau_2$ there are regular Cauchy pairs
$(S_*,T_*)$ and $(S_\Nb, T_\Nb)$ lying in $\Sigma_*=\{\tau_*\}\times\Sigma$ and $\Sigma_\Nb=\{\tau_\Nb\}\times\Sigma$ respectively, so that
\begin{equation}
(S_\Nb,T_\Nb) \prec_\Ib (S_*,T_*)\prec_\Ib (S_\Mb,T_\Mb),
\end{equation}
where $\Ib$ is the spacetime defined by the interpolating metric. 
Defining $\Pb=\Mb|_{(-\infty,\tau_1)\times\Sigma}$ and $\Fb=\Nb|_{(\tau_2,\infty)\times\Sigma}$,
it is clear that the obvious inclusions induce a chain of Cauchy mophisms as illustrated in Fig.~\ref{fig:chain}. 
 
The inheritance and timeslice properties now allows us to pass the split property along the chain,
as follows: $\omega_\Nb$ is split for $(S_\Nb,T_\Nb)$,  so therefore
\begin{tightitemize}
\item by timeslice $\omega_\Fb=\Af(\delta)^*\omega_\Nb$ is split for $(S_\Nb,T_\Nb)$ (regarded as a regular Cauchy pair in $\Fb$) 
\item by timeslice $\omega_\Ib =( \Af(\gamma)^{-1})^*\omega_\Fb$ is split for $(S_\Nb,T_\Nb)$ in $\Ib$
\item by inheritance, $\omega_\Ib$ is split for $(S_*,T_*)$ and hence $(S_\Mb,T_\Mb)$ in $\Ib$
\item  by timeslice $\omega_\Pb=\Af(\beta)^*\omega_\Ib$ is split for $(S_\Mb,T_\Mb)$ in $\Pb$
\item by timeslice $\omega_\Mb=(\Af(\alpha)^{-1})^*\omega_\Pb$ is split for $(S_\Mb,T_\Mb)$ in $\Mb$,
\end{tightitemize}
and the theorem is proved with $\nu=\Af(\delta)\Af(\gamma)^{-1}\Af(\beta)\Af(\alpha)^{-1}$. 
\end{proof}
\begin{figure}
\begin{center}
\newcommand{\fcol}{Gold}
\newcommand{\bcol}{blue}
\begin{tikzpicture}[scale=0.8]
\draw[fill=\bcol] (-6,0) -- ++(2,0) -- ++(0,4) -- ++(-2,0) -- cycle; 
\draw[fill=\fcol] (6,0) -- ++(2,0) -- ++(0,4) -- ++(-2,0) -- cycle; 
\draw[fill=\bcol] (0,0) -- ++(2,0) -- ++(0,4) -- ++(-2,0) -- cycle; 
\draw[top color = \fcol,bottom color=\bcol,color=\bcol] (0,1) -- ++(2,0) -- ++(0,2) -- ++(-2,0) -- cycle;
\draw[fill=\fcol,color=\fcol] (0,3) -- ++(2,0) -- ++(0,1) -- ++(-2,0) -- cycle;
\draw (0,0) -- ++(2,0) -- ++(0,4) -- ++(-2,0) -- cycle;

\draw[fill=\fcol] (3,3) -- ++(2,0) -- ++(0,1) -- ++(-2,0) -- cycle;
\draw[fill=\bcol] (-3,0) -- ++(2,0) -- ++(0,1) -- ++(-2,0) -- cycle;
 
\node[anchor=north] at (7,-0.25) {$\Nb$};
\node[anchor=north] at (4,-0.25) {$\Fb$};
\node[anchor=north] at (1,-0.25) {$\Ib$};
\node[anchor=north] at (-2,-0.25) {$\Pb$};
\node[anchor=north] at (-5,-0.25) {$\Mb$};
\draw[line width=1pt,->] (4.25,-0.7)--(6.75,-0.7)  node[pos=0.4,below]{$\delta$};
\draw[line width=1pt,->] (3.75,-0.7)--(1.25,-0.7) node[pos=0.4,below]{$\gamma$};
\draw[line width=1pt,->] (-1.75,-0.7)--(0.75,-0.7) node[pos=0.4,below]{$\beta$};
\draw[line width=1pt,->] (-2.25,-0.7)--(-4.7,-0.7) node[pos=0.4,below]{$\alpha$};

\draw[line width=1pt,color=red] (6,3.5) -- +(2,0);
\draw[line width=1pt,color=red] (3,3.5) -- +(2,0);
\draw[line width=1pt,color=red] (0,3.5) -- +(2,0);
\draw[line width=1pt,color=red] (6,2) -- +(2,0);
\draw[line width=1pt,color=red] (0,2) -- +(2,0);
\draw[line width=1pt,color=red] (-6,2) -- +(2,0);
\draw[line width=1pt,color=red] (0,0.5) -- +(2,0);
\draw[line width=1pt,color=red] (-3,0.5) -- +(2,0);
\draw[line width=1pt,color=red] (-6,0.5) -- +(2,0);

\node[anchor=east] at (-6,0.5) {$\Sigma_\Mb$};
\node[anchor=east] at (0,2) {$\Sigma_*$};
\node[anchor=west] at (8,3.5) {$\Sigma_\Nb$};
\end{tikzpicture}
\end{center}
\caption{The chain of Cauchy morphisms in the proof of Theorem~\ref{thm:rigid_split}.}
\label{fig:chain}
\end{figure}
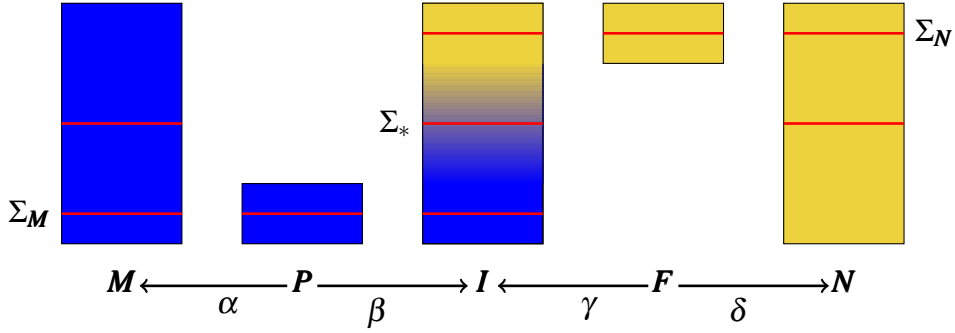

Now consider an arbitrary globally hyperbolic spacetime $\Mb\in\Loc$. Choose any smooth spacelike Cauchy surface $\Sigma$ of $\Mb$ with an induced orientation $\ogth$ (so that $\tgth\wedge\ogth$ is positively
oriented for any future-pointed timelike one-form $\tgth$), and endow $\Sigma$ with a complete Riemannian metric $h$~\cite{NomizuOzeki1961}. Then one obtains an ultrastatic spacetime $\Nb$
with metric~\eqref{eq:ultra} that is also an object of $\Loc$ and clearly has smooth spacelike Cauchy surfaces oriented-diffeomorphic to those of $\Mb$. Together with Theorem~\ref{thm:rigid_split}
this shows that the problem of establishing a split property in $\Mb$ [in the sense that, for any regular Cauchy pair, there exists a state for which the inclusion is split] can be reduced to the 
ultrastatic case, for which nuclearity criteria can be employed (and have been proved in certain models~\cite{Verch_nucspldua:1993,DAnHol:2006}).
 
Theorem~\ref{thm:rigid_split} can be extended in various ways. For example, one may construct states
that are split for finitely many regular Cauchy pairs lying in a common Cauchy surface. Again, suppose that
the theory has a state space $\Sf$ obeying the timeslice property; that is, an assignment to each $\Mb\in\Loc$ of a subset $\Sf(\Mb)$ of states on $\Af(\Mb)$ subject to the contravariance requirement 
\begin{equation}\label{eq:Sf}
\Af(\psi)^*\Sf(\Nb)\subset \Sf(\Nb)
\end{equation}
for each morphism $\psi:\Mb\to\Nb$, with equality if $\psi$ is Cauchy,
and also the requirement that each $\Sf(\Mb)$ be
closed under convex combinations and operations induced by $\Af(\Mb)$.  Suppose further that $\Sf(\Nb)$ contains a state $\omega_\Nb$ that is split for all regular Cauchy pairs. Then the state $\omega_\Mb$ constructed by the theorem 
belongs to $\Sf(\Mb)$ because $\nu$ is formed from Cauchy morphisms. 
Furthermore, if $\Sf$ obeys a condition of local quasiequivalence, one may deduce
in addition that every state in $\Sf(\Mb)$ is split for all regular Cauchy pairs in $\Mb$ 
and also for any pair of regions $O_i\in\OO(\Mb)$ such that
\begin{equation}
O_1\subset D_\Mb(S) \qquad D_\Mb(T)\subset O_2
\end{equation}
for a regular Cauchy pair $(S,T)$. Details appear in \cite{Few_split:2015}.

\paragraph{(Partial) Reeh--Schlieder property} It should be clear that the
proof of Theorem~\ref{thm:rigid_split} depends on only a few ideas:
the formulation in terms of regular Cauchy pairs, and the inheritance and
timeslice properties. Similar arguments apply to other properties of 
quantum field theory (a key theme of~\cite{FewVerch_aqftincst:2015} is the 
utility of this and related rigidity arguments). In particular, one obtains
a streamlined version of Sanders' results on Reeh--Schlieder properties~\cite{Sanders_ReehSchlieder}. 
\begin{definition}
Let $\Af:\Loc\to\CAlg$ be a locally covariant QFT and $\Mb\in\Loc$. A state $\omega$ on $\Af(\Mb)$ has the \emph{Reeh--Schlieder property} for a regular Cauchy pair $(S,T)$ if, in the GNS representation 
($\HH,\pi,\Omega)$, $\Omega$ is cyclic for $\Rgth_S$ and separating for $\Rgth_T$. 
\end{definition}
That is, vectors of the form $A\Omega$ ($A\in\Rgth_S$) are dense in $\HH$, and no $B\in\Rgth_T$
can annihilate $\Omega$. Note that the state is not assumed to be cyclic for all local algebras, 
so one might refer to this as a partial Reeh--Schlieder property, by comparison with the original
Minkowski-space result~\cite{ReehSchlieder:1961}. Our definition obeys an inheritance rule like that of the split property, but with a reversed ordering: if $\omega$ is Reeh--Schlieder for $(S,T)$ in $\Mb$, then it is also Reeh--Schlieder for every $(\tilde{S},\tilde{T})$ with 
\begin{equation}
(\tilde{S},\tilde{T})\prec_\Mb (S,T)
\end{equation}
simply because the separating property is inherited by subalgebras and cyclicity by algebras containing
the given one. One can also prove that our definition is stable with respect to the timeslice condition,
and this easily establishes:
\begin{theorem}[Rigidity of Reeh--Schlieder~{\cite[Thm 3.11]{Few_split:2015}}, cf. \cite{Sanders_ReehSchlieder}]  
Suppose that $\Af$ is a locally covariant QFT, $\Mb,\Nb\in\Loc$ have oriented-diffeomorphic Cauchy surfaces, and  $\omega_\Nb$ is a state on $\Af(\Nb)$ that is Reeh--Schlieder for all regular Cauchy pairs.  
 
Given any regular Cauchy pair $(S_\Mb,T_\Mb)$ in $\Mb$, there is a chain of Cauchy morphisms
between $\Mb$ and $\Nb$ inducing an isomorphism $\nu:\Af(\Mb)\to\Af(\Nb)$ such
that $\nu^*\omega_\Nb$ has the Reeh--Schlieder property for $(S_\Mb,T_\Mb)$.
\end{theorem}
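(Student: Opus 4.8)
The plan is to mirror the proof of Theorem~\ref{thm:rigid_split} almost line for line, exchanging the split property for the Reeh--Schlieder property and \emph{reversing} the direction in which the preorder is traversed. That earlier argument rested on exactly two structural facts: stability of the property under Cauchy morphisms, and an inheritance rule along $\prec_\Mb$. The inheritance rule is already in hand from the discussion above --- if $\omega$ is Reeh--Schlieder for $(S,T)$ then it is Reeh--Schlieder for every $(\tilde{S},\tilde{T})\prec_\Mb(S,T)$ --- so the only new ingredient to establish is timeslice stability.

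First I would prove the Reeh--Schlieder analogue of the stability lemma preceding Theorem~\ref{thm:rigid_split}: if $\psi:\Mb\to\Nb$ is Cauchy, then $\omega_\Nb$ is Reeh--Schlieder for $(\psi(S),\psi(T))$ if and only if $\Af(\psi)^*\omega_\Nb$ is Reeh--Schlieder for $(S,T)$. Because $\Af(\psi)$ is an isomorphism, the two states have unitarily equivalent GNS representations, via a unitary $W$ carrying the cyclic vector of one to that of the other and intertwining $\pi$ with $\pi\circ\Af(\psi)$. Using the regular-Cauchy-pair stability properties to identify the Cauchy developments $\psi\big(D_\Mb(S)\big)$ and $D_\Nb(\psi(S))$ within $\psi(\Mb)$ (and likewise for $T$), one checks that $W$ maps $\Rgth_S$ onto $\Rgth_{\psi(S)}$ and $\Rgth_T$ onto $\Rgth_{\psi(T)}$. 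Since cyclicity of the GNS vector for an algebra and its separating property are both preserved by a unitary that matches up the vectors and the algebras, the equivalence is immediate. This is precisely the geometric identification of developments already used in the split case, so no new work is required there.

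With stability secured, I would recycle the geometric construction of Theorem~\ref{thm:rigid_split} verbatim: arrange $\Mb$ and $\Nb$ on a common manifold $\RR\times\Sigma$, build an interpolating globally hyperbolic spacetime $\Ib$ agreeing with $\Mb$ on $(-\infty,\tau_1)\times\Sigma$ and with $\Nb$ on $(\tau_2,\infty)\times\Sigma$, and produce the same chain of Cauchy morphisms $\alpha,\beta,\gamma,\delta$ displayed in Fig.~\ref{fig:chain}. The single change is that I now select regular Cauchy pairs on the intermediate surfaces $\Sigma_*$ and $\Sigma_\Nb$ so that the preorder runs in the opposite sense,
\begin{equation}
(S_\Mb,T_\Mb)\prec_\Ib (S_*,T_*)\prec_\Ib (S_\Nb,T_\Nb),
\end{equation}
which is again supplied by the stability properties of regular Cauchy pairs in all sufficiently nearby surfaces. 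Propagating the property now reads: $\omega_\Nb$ is Reeh--Schlieder for $(S_\Nb,T_\Nb)$; transport it to $\Ib$ by timeslice; apply the (reversed) inheritance rule to descend to $(S_*,T_*)$ and thence to $(S_\Mb,T_\Mb)$; transport back to $\Mb$ by timeslice. The theorem follows with $\nu=\Af(\delta)\Af(\gamma)^{-1}\Af(\beta)\Af(\alpha)^{-1}$.

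The main point to watch --- rather than a genuine obstacle --- is precisely this reversal of direction: because Reeh--Schlieder inheritance propagates to \emph{smaller} pairs (whereas split inheritance propagates to larger ones), the interpolation must be set up so that the chain climbs from the target pair $(S_\Mb,T_\Mb)$ up to some pair $(S_\Nb,T_\Nb)$ for which $\omega_\Nb$ is known to be Reeh--Schlieder. Since \cite[Lem.~2.6]{Few_split:2015} furnishes regular Cauchy pairs both preceding and preceded by a given one in nearby surfaces, this reversal costs nothing, and the remainder of the argument is identical to that of Theorem~\ref{thm:rigid_split}.
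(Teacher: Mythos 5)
Your proposal is correct and matches the paper's proof, which is stated in one line as ``Invert the argument for the split property, replacing $\prec$ by $\succ$'': you have simply expanded that inversion explicitly --- the timeslice stability lemma via unitary equivalence of GNS representations, the reversed inheritance rule, and the same interpolating chain of Cauchy morphisms with the preorder now running $(S_\Mb,T_\Mb)\prec_\Ib (S_*,T_*)\prec_\Ib (S_\Nb,T_\Nb)$. No discrepancies to report.
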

\begin{proof} Invert the argument for the split property,
replacing $\prec$ by $\succ$. \end{proof}

As with the split property, the above Reeh--Schlieder property in general spacetimes
can now be traced to the ultrastatic case. Here, one can give a \emph{tube condition} 
(satisfied in general Wightman theories in Minkowski space~\cite{Borchers:1961} and in particular linear  models in curved spacetimes~\cite{Stroh:2000})  that implies that ground states obey the full Reeh--Schlieder property.  Thus the (partial) Reeh--Schlieder property becomes a reasonable expectation
in general locally covariant theories. Our result admits various extensions, of which the most
important is that the proofs of the split and Reeh--Schlieder results may be combined
to yield states that are \emph{both Reeh--Schlieder and split}
for (finitely many) regular Cauchy pairs in a common Cauchy surface. 
Consequently, given any regular Cauchy pair $(S,T)$ one may find a state $\omega$ with
GNS vector $\Omega$ so that $(\Rgth_S,\Rgth_T,\Omega)$ is
a standard split inclusion, leading to applications analogous to those described in section~\ref{sec:split}.  Finally, let us mention that
a slightly different definition of the split property has been studied
in~\cite{BrFrImRe:2014} and used to discuss the tensorial structure
of locally covariant theories.

\paragraph{Distal split}  Suppose we have a locally covariant theory $\Af$ with state space $\Sf$. In $d$-dimensional Minkowski spacetime $\Mb$, we define the \emph{splitting distance} of $S\subset\RR^{d-1}$ by
\begin{equation}
d(S) = \inf\{\rho>0: \text{$(S, B(S,\rho))$ is split for some state in $\Sf(\Mb)$}\},
\end{equation}
where $B(S,\rho)$ is the open Euclidean ball of radius $\rho$ about $S$. 
If $d(S)$ is finite but nonzero we say that the \emph{distal split property} holds. It is of interest to understand what happens to such models in the locally covariant setting, 
assuming that the state space obeys local quasiequivalence.
Suppressing some $\epsilon$'s (see \cite[\S 3.4]{Few_split:2015} for the precise statements and proof) one
has the following. If $f\in\text{Diff}(\RR^{d-1})$ with uniformly bounded derivatives, and $r>d(f(S))$, then 
\begin{equation}\label{eq:splitbd}
d(S) \le \inf\{\rho>0: B(f(S),r)\subset f(B(S,\rho))\}
\end{equation}
and hence
\begin{equation}
d(S) \le \kappa d(f(S)),
\end{equation}
where $\kappa$ is the supremum of the norm of the derivative $\|D(f^{-1})\|$ over $B(f(S),r)\setminus f(S)$. 

As a first example, taking $f(\xb)=\xb/\lambda$ implies that $d(\lambda S)\le \lambda d(S)$
and hence the existence of a uniform splitting distance ($d(S)=d_0\ge 0$ for all $S$) actually 
implies the split property ($d_0=0$). Of course, what this means is that models such as
that of~\cite[Thm 4.3]{DAnDopFreLon:1987} given in \eqref{eq:bad_mn} cannot be compatible with all the hypotheses
required -- one might well suspect that local quasiequivalence and/or the timeslice property
become problematic. 

For a second example, let $S$ be an open ball and design $f\in\text{Diff}(\RR^{d-1})$ with the properties that $f(S)=S$ and $B(S,d(S))\subset f(B(S,\frac{1}{2}d(S)))$, and so that
$f$ acts trivially outside a compact set.
Then (again suppressing $\epsilon$'s) one obtains
\begin{equation}
d(S)\le \frac{1}{2}d(S), 
\end{equation}
and hence the only possible splitting distances are $0$ or $\infty$. 
This proves a stronger result that the distal split property implies the split property (at least for sets diffeomorphic to balls), given our other hypotheses.
 
The proof of~\eqref{eq:splitbd} takes its inspiration from cosmological inflation (see Fig.~\ref {fig:inflate}). Let $S_1$ be an open bounded subset in a constant time hypersurface and let $T_1=B(S_1,r)$ where  $r>d(S_1)$, so that some state is split for $(S_2,T_2)$. Then we may infer that the state is split for suitable regular Cauchy pairs lying in a constant time hypersurface to the past. In the left-hand half of Fig.~\ref{fig:inflate} we illustrate the situation in Minkowski space: $S_2$ is has a smaller diameter than $S_1$ while $T_2$ has a larger diameter than $T_1$, so this argument
(not very helpfully) gives an upper bound on $d(S_2)$ that is larger than $d(S_1)$. In the right-half of the figure, however, we imagine that
the metric undergoes a period of inflation between the regions
around the two hypersurfaces and outside the Minkowski $D(S_1)$).
If we arrange matters so that $T_2$ fits within a ball of radius $\kappa r\diam(S_2)/\diam(S_1)$
for $\kappa<1$ then we obtain a tighter bound on $d(S_2)$ than $r$ is on $d(S_1)$.  
The $\epsilon$'s suppressed in our discussion relate to the spreading of lightcones (at the
Minkowski speed of light) which can be made arbitrarily small by considering nearby hypersurfaces.

Actually, there is some physics to go alongside what would otherwise
appear to be a geometric trick. As described earlier, nonzero splitting distances are 
associated with the existence of a critical temperature, above 
which the thermal equilibrium states are no longer locally quasiequivalent to the vacuum. On the other hand, inflation cools temperature (to the future); conversely, one could expect that some states of subcritical temperature near the later hypersurface have supercritical temperatures
at the earlier one. This leads to a contradiction between the assumptions
of the timeslice axiom and local quasiequivalence.

\begin{figure}\tdplotsetmaincoords{75}{90}
\pgfmathsetmacro{\rvec}{.8}
\pgfmathsetmacro{\thetavec}{15}
\pgfmathsetmacro{\phivec}{60}
\begin{center}
\begin{tikzpicture}[scale=4,tdplot_main_coords]
\coordinate (O) at (0,0,0);
\coordinate (Q) at (0,0,-0.4);

\tdplotdrawarc[color=lightgray,fill=lightgray]{(Q)}{0.8}{0}{360}{anchor=north}{}
\tdplotdrawarc[color=vlg,fill=vlg]{(O)}{0.8}{0}{360}{anchor=north}{}
\tdplotdrawarc{(O)}{0.35}{0}{360}{anchor=north}{}
\tdplotdrawarc{(O)}{0.6}{0}{360}{anchor=north}{}
\tdplotdrawarc{(Q)}{0.25}{0}{360}{anchor=north}{}
\draw[dotted] (0,0.35,0) -- (0,0.25,-0.4);
\draw[dotted] (0,-0.35,0) -- (0,-0.25,-0.4);
\node at (0,0.0,0.0) {$S_1$};
\node at (0,0.475,0) {$T_1$};
\node at (0,0.0,-0.4) {$S_2$};
\draw[<->] (-0.247,0.247,0) -- (-0.424,0.424,0);
\node at (-0.3,0.3,0.1) {$r$};

\draw[dashed] (0,0.6,0) -- (0,0.7,-0.4);
\draw[dashed] (0,-0.6,0) -- (0,-0.7,-0.4);
\tdplotdrawarc{(Q)}{0.7}{0}{360}{anchor=north}{}
\node at (0,0.55,-0.4) {$T_2$};
\end{tikzpicture}\hfil
\begin{tikzpicture}[scale=4,tdplot_main_coords]
\coordinate (O) at (0,0,0);
\coordinate (Q) at (0,0,-0.4);

\tdplotdrawarc[color=lightgray,fill=lightgray]{(Q)}{0.8}{0}{360}{anchor=north}{}
\tdplotdrawarc[color=vlg,fill=vlg]{(O)}{0.8}{0}{360}{anchor=north}{}
\tdplotdrawarc{(O)}{0.35}{0}{360}{anchor=north}{}
\tdplotdrawarc{(O)}{0.6}{0}{360}{anchor=north}{}
\tdplotdrawarc{(Q)}{0.25}{0}{360}{anchor=north}{}
\draw[dotted] (0,0.35,0) -- (0,0.25,-0.4);
\draw[dotted] (0,-0.35,0) -- (0,-0.25,-0.4);
\node at (0,0.0,0.0) {$S_1$};
\node at (0,0.475,0) {$T_1$};
\node at (0,0.0,-0.4) {$S_2$};
\draw[<->] (-0.247,0.247,0) -- (-0.424,0.424,0);
\node at (-0.3,0.3,0.1) {$r$};

\draw[dashed]  (0,0.6,0) .. controls (0,0.65,-0.2) and  (0,0.6,-0.17) .. (0,0.5,-0.2);
\draw[dashed]  (0,0.5,-0.2) .. controls (0,0.4,-0.23) and  (0,0.35,-0.2) .. (0,0.4,-0.4);
\draw[dashed]  (0,-0.6,0) .. controls (0,-0.65,-0.2) and  (0,-0.6,-0.17) .. (0,-0.5,-0.2);
\draw[dashed]  (0,-0.5,-0.2) .. controls (0,-0.4,-0.23) and  (0,-0.35,-0.2) .. (0,-0.4,-0.4);
\tdplotdrawarc{(Q)}{0.4}{0}{360}{anchor=north}{}
\node at (0,0.31,-0.4) {$T_2'$};
\end{tikzpicture}
\end{center}
\caption{Illustration of the `inflationary' proof of the distal split bound.}
\label{fig:inflate}
\end{figure}
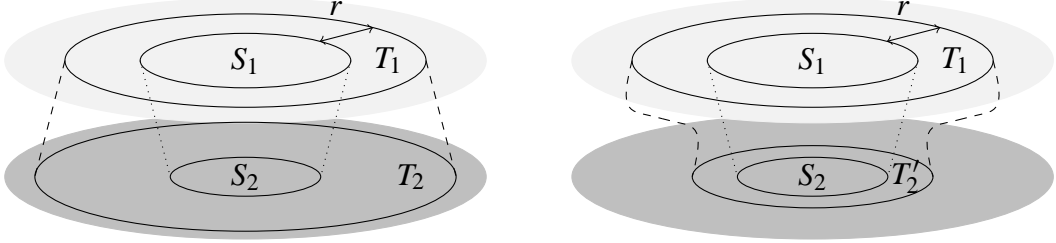

\section{Concluding remarks}

It is hoped that this paper has explained the physical significance of the 
split and nuclearity properties, and also explained -- in the context of the split and Reeh--Schlieder properties -- a general proof
strategy for establishing structural properties of locally covariant quantum field theories. As a further example, we mention that Lechner and Sanders have recently applied the machinery of regular Cauchy pairs in their proof of modular nuclearity~\cite{LecSan:2015}. We have sketched
links with the theory of QEIs, which require further investigation. At first sight, the split and nuclearity properties might not seem to have much
in common with each other, or with the QEIs. But in fact, all three can
be viewed as expressions of the uncertainty principle: this is most
obvious in the case of the QEIs, which provide rigorous time-energy
inequalities, but it is also true of the split property (the need for a collar
region to guarantee independence indicating a loss of sharp localisation) and nuclearity (understood as a
constraint on the number of states per unit volume in phase space). 
They therefore occupy a fascinating position at the nexus of relativity and quantum theory. 
 
\paragraph{Acknowledgement} Section~\ref{sec:cst} of this paper is based on \cite{Few_split:2015} and a talk given at the annual meeting of the Deutschen Mathematiker-Vereinigung (DMV) in Hamburg, September 2015. I thank
the organisers of the Mini-Symposium {\em Algebraic Quantum Field Theory on Lorentzian Manifolds} for the invitation to speak and
to produce this paper. I am also grateful to Martin Porrmann for
discussions on nuclearity indices in the early years of the present millennium.

 \appendix

\section{Proof of Theorem~\ref{thm:mbound}}
\label{sect:constr}
 
Recall that $m_0\ge 0$ has been fixed and that $f\in\CoinX{\RR}$ is nonnegative, even, with unit integral, and has a Fourier transform that is real, even, nonnegative and bounded from below by  
\begin{equation}
\hat{f}(u)\ge \varphi(|u|)
\end{equation}
on $[m_0,\infty)$, where $\varphi:[m_0,\infty)\to\RR^+$ is monotone decreasing. Note that nonnegativity of $f$ and $\hat{f}$ implies that $\hat{f}(m_0)\le \hat{f}(0)=1$ and hence $\varphi(u)\le 1$ for all $u\ge m_0$.

Consider a single Klein--Gordon field of mass $m\ge m_0$ on the symmetric Fock space $\FF$ over $L^2(\RR^3,d^3\kb/(2\pi)^3)$, and define the usual annihilation operators $a(\kb)$ by 
\begin{equation}
(a(\kb)\Psi)^{(n)}(\kb_1,\ldots,\kb_n) = \sqrt{n+1}\Psi^{(n+1)}(\kb,\kb_1,\ldots,\kb_n),
\end{equation}
where for $\Psi\in\FF$, $\Psi^{(n)}$ denotes its $n$-particle component.\footnote{The annihilation operators used in Section~\ref{sec:nuc} are $\sa(u)=\int d^3\kb/(2\pi)^3  a(\kb) \overline{u(\kb)}$.}
Writing $a^\dagger(\kb)$ for the adjoint of $a(\kb)$ as a quadratic form,
the canonical commutation relations are
\begin{equation}
[a(\kb),a(\kb')] = 0, \qquad
[a(\kb),a^\dagger(\kb')] = (2\pi)^3 \delta(\kb-\kb')\II\,,
\end{equation} 
and the quantum field is given by
\begin{equation}
\Phi(x) = \int\frac{d^3\kb}{(2\pi)^3\sqrt{2\omega}}
\left(a(\kb)e^{-ik_a x^a} + a^\dagger(\kb) e^{ik_a x^a}\right)\,,
\end{equation}
in which $k^a=(\omega,\kb)$ with $\omega=(\|\kb\|^2+m^2)^{1/2}$.
 
The energy density (with respect to the standard time coordinate) is a sum of Wick squares
\begin{equation}
\rho_m(x) = \frac{1}{2}\left({:}(\nabla_0\Phi(x))^2{:} + \sum_{i=1}^3 {:}(\nabla_i\Phi(x))^2{:} + m^2{:}\Phi(x)^2{:}\right),
\end{equation}
so, again in a quadratic form sense, 
\begin{align}
\rho_m(x) &= \int\frac{d^3\kb}{(2\pi)^3}\frac{d^3\kb'}{(2\pi)^3}\frac{1}{4\sqrt{\omega\omega'}} \left\{(\omega\omega'+\kb\cdot\kb'+m^2)e^{i(k-k')\cdot
x}a^\dagger(\kb)a(\kb') \right. \nonumber \\ 
&\qquad\qquad\qquad \left. - (\omega\omega'+\kb\cdot\kb'-m^2)e^{-i(k+k')\cdot
x}a(\kb)a(\kb')\right\} + \text{H.C.},
\end{align}
where $\text{H.C.}$ denotes the hermitian conjugate.

Next, choose any smooth, symmetric, nonnegative function
$B:\RR^3\times\RR^3\to\RR$, with compact support obeying
\begin{equation}
\supp B\subset\{(\ub,\ub'): \|\ub\|,\|\ub'\|\in[{\textstyle\frac{1}{2}},1];\quad |\theta(\ub,\ub')|<\pi/3\},
\label{eq:Bsupp}
\end{equation}
where $\theta(\ub,\ub')$ is the angle between the vectors $\ub$, $\ub'$, and normalised so that
\begin{equation}\label{eq:Bnorm}
\int \frac{d^3\ub}{(2\pi)^3}\, \frac{d^3\ub'}{(2\pi)^3} B(\ub,\ub') =1\,.
\end{equation}
The function $C:\RR^3\times\RR^3\to\RR$
\begin{equation}
C(\ub,\ub') = \int \frac{d^3\ub''}{(2\pi)^3} B(\ub'',\ub)B(\ub'',\ub')
\end{equation}
is then pointwise nonnegative with support obeying 
\begin{equation}
\supp C\subset\{(\ub,\ub'): \|\ub\|,\|\ub'\|\in[{\textstyle\frac{1}{2}},1] \}
\label{eq:Csupp}.
\end{equation}

We now define, for $\lambda>0$, the vacuum-plus-two-particle 
superposition 
\begin{equation}
\Psi_{m,\lambda} = \sN_{m,\lambda}\left[
\Omega + \frac{\lambda}{\sqrt{2}}\int \frac{d^3\kb}{(2\pi)^3}\,\frac{d^3\kb'}{(2\pi)^3}
b(\kb,\kb')a^\dagger(\kb)a^\dagger(\kb')\Omega\right]\,,
\end{equation}
where $\Omega\in\FF$ is the Fock vacuum vector, $\sN_{m,\tau,\lambda}$ is a normalisation constant and
\begin{equation}\label{eq:bdef}
b(\kb,\kb') = \frac{\varphi(2\sqrt{2} m)}{m^3}B(\kb/m,\kb'/m)\,.
\end{equation}
That is, $\Psi_{m,\lambda}^{(0)}=\sN_{m,\lambda}$, $\Psi_{m,\lambda}^{(2)}(\kb,\kb') =\sN_{m,\lambda} \lambda b(\kb,\kb')$ and all other components of $\Psi_{m,\lambda}$ vanish. 
As $b$ is compactly supported, each $\Psi_{m,\lambda}$ is
a Hadamard state. The normalisation constant is 
\begin{equation}
\sN_{m,\lambda} =
\left( 1+\lambda^2\varphi(2\sqrt{2} m)^2 \Tr C\right)^{-1/2} \ge \left( 1+\lambda^2  \Tr C\right)^{-1/2},
\end{equation}
where we have used $\varphi\le 1$ and employed the short-hand notation
\begin{equation}
\Tr C = \int\frac{d^3\ub}{(2\pi)^3} C(\ub,\ub)\,.
\end{equation}

Using the general formulae $\ip{\Omega}{a(\kb)a(\kb')\Psi} = \sqrt{2}\Psi^{(2)}(\kb,\kb')$ and
\begin{equation} 
\ip{\Psi}{a^\dagger(\kb)a(\kb')\Psi} = 2\int\frac{d^3\kb''}{(2\pi)^3}
\overline{\Psi^{(2)}(\kb'',\kb)}\Psi^{(2)}(\kb'',\kb'), 
\end{equation}
for vacuum-plus-two-particle superpositions $\Psi$, the expected normal ordered energy density is
\begin{align}
\ip{\Psi_{m,\lambda}}{\rho_m(x) \Psi_{m,\lambda}} &= |\sN_{m,\lambda}|^2\Re
\int \frac{d^3\kb}{(2\pi)^3}\,\frac{d^3\kb'}{(2\pi)^3}
\frac{1}{\sqrt{\omega\omega'}}\left(
\lambda^2 c(\kb,\kb')(\omega\omega'+\kb\cdot\kb'+m^2)e^{i(k-k')\cdot
x}\right.\nonumber\\
&\qquad\qquad\qquad\qquad
\left.
-\frac{\lambda}{\sqrt{2}} b(\kb,\kb')(\omega\omega'+\kb\cdot\kb'-m^2)e^{-i(k+k')\cdot x}
\right),
\end{align}
where
\begin{equation}\label{eq:cdef}
c(\kb,\kb') = 
\frac{\varphi(2\sqrt{2} m)^2 }{m^3}C(\kb/m,\kb'/m)\,.
\end{equation}
As $\Psi_{m,\lambda}$ is Hadamard, the
expectation value is smooth in $x$ and we can therefore
average against $f(t)$, using the fact
that $\hat{f}$ is real, to find
\begin{align}\label{eq:rho_calc}
\int
\ip{\Psi_{m,\lambda}}{\rho_m(t,\Ob) \Psi_{m,\lambda}}f(t)\,dt &=
|\sN_{m,\lambda}|^2\int \frac{d^3\kb}{(2\pi)^3}\,\frac{d^3\kb'}{(2\pi)^3}
\frac{1}{\sqrt{\omega\omega'}}\nonumber\\
&\qquad\times\left(
\lambda^2 c(\kb,\kb')(\omega\omega'+\kb\cdot\kb'+m^2)
\hat{f}(\omega'-\omega)\right.\nonumber\\
&\qquad
\left.
-\frac{\lambda}{\sqrt{2}} b(\kb,\kb')(\omega\omega'+\kb\cdot\kb'-m^2)\hat{f}(\omega+\omega')
\right).
\end{align}

We now seek an upper bound on this last quantity. First note that, 
for $(\kb,\kb')\in\supp c$, we have $\omega,\omega'\in [\sqrt{5}m/2,\sqrt{2}m]$ and
\begin{equation}
\frac{1}{\sqrt{\omega\omega'}}\left(\omega\omega'+\kb\cdot\kb'+m^2\right) \le
\frac{4m^2}{\sqrt{5}m/2}=
\frac{8m}{\sqrt{5}}\,,
\end{equation}
while if $(\kb,\kb')\in\supp b$ we have
\begin{equation}
\frac{1}{\sqrt{\omega\omega'}}\left(\omega\omega'+\kb\cdot\kb'-m^2\right)
\ge \frac{1}{m\sqrt{2}}
\left(\frac{m^2}{4}+\kb\cdot\kb' \right) \ge \frac{3m}{8\sqrt{2}}\,.
\end{equation}
Second, because $f$ and $\hat{f}$ are positive, $
\hat{f}(\omega'-\omega)\le
\hat{f}(0)=1$;  
furthermore, for $(\kb,\kb')\in\supp b$ (and so, in particular, 
$m_0<\sqrt{5}m_0\le \omega+\omega'\le 2\sqrt{2}m_0$) we have
\begin{equation}
\hat{f}(\omega+\omega')\ge \varphi(\omega+\omega')\ge \varphi(2\sqrt{2} m)\,.
\end{equation}
Accordingly, as $\lambda$ and the functions $b$ and $c$ are positive,
we obtain the bound
\begin{equation}
\text{L.H.S. of~\eqref{eq:rho_calc}}   \le 
|\sN_{m,\lambda}|^2\int \frac{d^3\kb}{(2\pi)^3}\,\frac{d^3\kb'}{(2\pi)^3}
\left(\lambda^2 c(\kb,\kb')\frac{8m}{\sqrt{5}} 
-\lambda b(\kb,\kb')\frac{3m}{16}\varphi(2\sqrt{2} m)
\right),
\end{equation}
the right-hand side of which can be written as
$-|\sN_{m,\lambda}|^2 P(\lambda) m^4 \varphi(2\sqrt{2} m)^2$,
where 
\begin{equation}
P(\lambda) = \frac{3}{16} \lambda -\lambda^2 \frac{8}{\sqrt{5}} \int 
\frac{d^3\ub}{(2\pi)^3}\,\frac{d^3\ub'}{(2\pi)^3}
C(\ub,\ub') ,
\end{equation}
as follows on inserting the definitions \eqref{eq:bdef}, \eqref{eq:cdef} of $b(\kb,\kb')$ and
$c(\kb,\kb')$ and using the normalisation \eqref{eq:Bnorm} of $B$. 
Now the quadratic $P(\lambda)$ has a positive maximum at some
$\lambda_0>0$ (note that $P$ and $\lambda_0$ are independent of $m$). Defining $\Psi_{m}=\Psi_{m,\lambda_0}$, we therefore obtain
\begin{equation}
\int
\ip{\Psi_{m}}{\rho_m(t,\Ob) \Psi_{m}}f(t)\,dt 
\le -\Gamma m^4 \varphi(2\sqrt{2} m)^2\,,
\end{equation}
where
\begin{equation}
\Gamma = \frac{P(\lambda_0)}{1+\lambda_0^2 \Tr C}
\end{equation}
depends only on the function $B$ (and not on $m$ or $\varphi$). This completes the proof of Theorem~\ref{thm:mbound}.
 
{\small
}

\end{document}